\documentclass[a4paper]{article}

\usepackage{tikz}

\usepackage{amsmath} 
\usepackage{amssymb}
\usepackage{stmaryrd} 

\usepackage{a4wide}

 \usepackage[pdfauthor={Johannes Marti},
             pdftitle={Conditional Logic is Complete for Convexity in the Plane},
            ]{hyperref}

\usepackage{amsthm}
\theoremstyle{plain}
\newtheorem{theorem}{Theorem}[section]

\newtheorem{lemma}[theorem]{Lemma}
\newtheorem{proposition}[theorem]{Proposition}

\newtheorem{problem}[theorem]{Problem}

\newtheorem{maintheorem}[theorem]{Theorem}

\theoremstyle{definition}

\newtheorem{example}[theorem]{Example}
\newtheorem{remark}[theorem]{Remark}

\newcommand{\C}{\mathcal{C}}
\newcommand{\D}{\mathcal{D}}
\newcommand{\F}{\mathcal{F}}
\newcommand{\hull}[1]{\hullSym({#1})}
\newcommand{\hullSym}{\mathrm{co}}
\newcommand{\comp}[1]{\overline{#1}}
\newcommand{\exiimage}[1]{{#1^\exists}}
\newcommand{\extreme}[1]{\mathrm{ex}({#1})}
\newcommand{\ext}[1]{{\llbracket {#1} \rrbracket}}
\newcommand{\Lang}{\mathcal{L}}
\newcommand{\N}{\mathbb{N}}
\newcommand{\Prop}{\mathsf{Prop}}
\newcommand{\powerset}{\mathcal{P}}
\newcommand{\X}{\mathcal{X}}
\newcommand{\R}{\mathbb{R}}
\newcommand{\uniimage}[1]{{#1^\forall}}
\newcommand{\upset}[1]{{{#1}\!\!\uparrow}}
\newcommand{\upseti}[2]{{{#2}\!\!\uparrow_{#1}}}
\newcommand{\upsets}[1]{\mathcal{U}({#1})}

\renewcommand{\iff}{\quad \mbox{iff} \quad}

\def\mathrlap{\mathpalette\mathrlapinternal}%
\def\mathrlapinternal#1#2{\rlap{$\mathsurround=0pt#1{#2}$}}%
\makeatletter
\newdimen\@mydimen%
\newdimen\@myHeightOfBar%
\settoheight{\@myHeightOfBar}{$|$}%
\newcommand{\SetScaleFactor}[1]{%
    \settoheight{\@mydimen}{#1}%
    \pgfmathsetmacro{\scaleFactor}{\@mydimen/\@myHeightOfBar}%
}

\newcommand*{\Scale}[2][3]{\scalebox{#1}{\ensuremath{#2}}}%
\newcommand{\quickDefaults}{\mathrel{\Scale[0.75]{|\mathrlap{\kern-0.48ex\sim}\hphantom{\kern-0.41ex\sim}}}}

\newcommand{\cond}{\leadsto}

\author{Johannes Marti}

\title{Conditional Logic is Complete for Convexity in the Plane} 

\begin{document}

\maketitle

  \begin{abstract}
\noindent
\input{abstract}
  \end{abstract}

\section{Introduction}

Preferential conditional logic was introduced by Burgess
\cite{Burgess81} and Veltman \cite{Veltman85} to axiomatize the
validities of the conditional with respect to a semantics in models
based on ordering relations. In this semantics a conditional $\varphi
\cond \psi$ is true with respect to an order over a finite set of worlds
if the consequent $\psi$ is true at all worlds that are minimal in the
order among the worlds at which the antecedent $\varphi$ is true.
Preferential conditional logic is sound and complete in this semantics
with respect to models that are based on arbitrary preorders. But both
Burgess and Veltman note that for completeness it suffices to consider
partial orders. The axioms of preferential conditional logic are a
weakening of the axioms in Lewis' conditional logic \cite{Lewis73} that
is sound and complete for models that are based on strict weak orders,
which are in bijective correspondence with total preorders.


Similar semantic clauses as in conditional logic, and thus analogous
axiomatic systems, have later also been used in default reasoning
\cite{Shoham88,Kraus90}, in belief revision theory \cite{Grove88,Rott09}
and in dynamic epistemic logic \cite{BaltagSmets06,Benthem07b}. It
should also be mentioned that the axiomatizations of conditional logics
with respect to their order semantics are similar to the
characterizations of choice functions that are rationalizable by some
preference relation \cite{Arrow59,Sen71}. Moreover, the semantic clause
for the conditional in orders, which is often attributed to
\cite{Lewis73}, goes back to an earlier semantic clause for conditional
obligations in deontic logic \cite{Hansson69}.

Preferential conditional logic has also been shown to be complete with
respect to semantic interpretations that are quite different from the
semantics in terms of partial orders. Most notable are the
interpretation of validity of inferences between conditionals as
preservation of high conditional probability \cite{Adams75,Geffner92}
and premise semantics, where the conditional is interpreted relative to
a premise set. A premise set is a family of sets of worlds, thought of
as propositions that encode relevant background information from the
linguistic context \cite{Veltman76,Kratzer81}. In this paper we provide
yet another interpretation to preferential conditional logic. We show
that it is complete with respect to convexity over finite sets of points
in the Euclidean plane. This places conditional logic into the tradition
of modal logics with a natural spatial or geometric semantics
\cite{Benthem07}, most famous of which is the completeness result for S4
with respect to the topology of the real line by McKinsey and Tarski
\cite{McKinsey44,Bezhanishvili05}. 

\begin{figure}
\begin{center}
\begin{tikzpicture}[scale=1]
  \node (innerdown) at (8,3.4) {
$\begin{array}{l}
 \mbox{true conditionals:} \\
 (p \lor q) \cond r \\
 (\neg p \lor \neg q) \cond \neg p \\
 \top \cond (q \leftrightarrow r) \\
 \\
 \mbox{false conditionals:} \\
 p \cond r \\
 \neg r \cond \neg q \\
 \top \cond r
 \end{array}$
};
  \filldraw[dotted, color=black!80, fill=black!8] (4,5) -- (0,5) -- (2.4,3);
  \draw[dotted, color=black!80] (4,5) -- (1.2,1.5);
 \node (x) at (0,5) {$p q r$};
 \node (y) at (4,5) {$\bar{p} q r$};
 \node (z) at (2.4,3) {$p \bar{q} r$};
 \node (u) at (1.9,4.3) {$p q \bar{r}$};
 \node (v) at (1.2,1.5) {$\bar{p} \bar{q} \bar{r}$};
\end{tikzpicture}
\end{center}
\caption{A finite set of points in the plane and examples of
conditionals that are true or false relative to this set of points.}
\label{intro example}
\end{figure}

To illustrate our semantics consider the finite set of points in
Figure~\ref{intro example}. Think of these points as satisfying
propositional letters as indicated in their label. For instance the
point $\bar{p}qr$ in the upper right corner satisfies $q$ and $r$ but
not $p$. Our semantics is such that a conditional $\varphi \cond \psi$
is true relative to such a set of points if the set of points at which
$\varphi$ is true is completely contained in the convex hull of the set
of points at which both $\varphi$ and $\psi$ are true. Recall that a
convex set is a set that for any two points in the set also contains the
complete line segment between these points. Intuitively, these are the
sets without holes or dents. The convex hull of a set is the least
convex set that contains the set. As an example of a convex hull we have
in Figure~\ref{intro example} that the shaded area is the convex hull of
the three points $pqr$, $\bar{p}qr$ and $p\bar{q}r$. In this example
the conditional $(p \lor q) \cond r$ is true because all points at which
$p \lor q$ is true are contained in the convex hull of the points where
$p \lor q$ and $r$ are both true. The conditional $p \cond r$ is however
not true in the example because the point $pq\bar{r}$ satisfies $p$ but
is not contained in the convex hull of the points $pqr$ and
$p\bar{q}r$, which are all the points that satisfy $p$ and $r$.

An equivalent formulation of our semantic clause is that a conditional
$\varphi \cond \psi$ is true if the consequent $\psi$ is true at all the
extreme points of the set of points where the antecedent $\varphi$ is
true. An extreme point of some set is a point in the set that is not in
the convex hull of all the other points from the set. Intuitively, the
extreme points of some set are the outermost points of that set. In the
example from Figure~\ref{intro example} we have that $pqr$, $\bar{p}qr$
and $p\bar{q}r$ are the extreme points of the set that is shaded. On the
other hand $pq \bar{r}$ is not an extreme point of the shaded set
because it is in the convex hull of the points $pqr$, $\bar{p}qr$ and
$p\bar{q}r$. Note that in this formulation of the semantic clause for a
conditional $\varphi \cond \psi$ the extreme points of the set of points
satisfying the antecedent $\varphi$ play a role that is analogous to the
minimal $\varphi$-worlds in the order semantics. Conversely, we will see
later that the upwards closed set in an order play a role that is
analogous to the convex sets in the geometric semantics.

In this paper we focus on a semantics that is only defined for formulas
that do not contain nested conditionals and in which all propositional
letters occur in the scope of a conditional. It is possible to overcome
this restriction but this has no significant influence on the axiomatic
questions that we are concerned with.


The main result of our paper can be formulated as follows: All finite
constellation of points in the plane of the kind as shown in
Figure~\ref{intro example} satisfy all the theorems in preferential
conditional logic and every formula that is not a theorem of the logic
is false in some such constellation.

The completeness proof from this paper consists of two steps:
\begin{enumerate}
 \item We first observe that preferential conditional logic is complete
for a semantics in models based on convex geometries.
 \item We then show that every finite convex geometry can be represented
by a finite set of points in the plane in such a way that all true
formulas of conditional logic are preserved.
\end{enumerate}
From these two steps we obtain our completeness result because by the
first step every consistent formula $\varphi$ is true in some finite
model based on a convex geometries and by the second step this model can
be transformed into a concrete model of $\varphi$ that is based on a
finite set of points in the plane. We now describe these two steps in
greater detail. 

In the first step we make use of the notion of a convex geometry
\cite{Edelman85,Korte91,Adaricheva16}. Formally, convex geometries are
families of sets that are closed under arbitrary intersections and have
the anti-exchange property, which is a separation property that is
reminiscent of the $T_0$ separation property in topology. Convex
geometries are a combinatorial abstraction of the notion of a convex set
in Euclidean spaces, such as the Euclidean plane. This is somewhat
analogous to how topological spaces are an abstraction from the notions
of open and closed sets in Euclidean spaces. The convex sets in any
subspace of an Euclidean space form a convex geometry. But it is not the
case that every convex geometry, or even every finite 
convex geometry, is isomorphic to a subspace of some Euclidean space. An
easy way to see this is to observe that in any Euclidean space all
singleton sets are convex, which is not enforced by the definition of a
convex geometry.

One can view the semantics in convex geometries as a generalization of
the order semantics over partial orders. The family of upwards closed
sets in any partial order form a convex geometry. Moreover, a
conditional is true relative to a given partial order if and only if it
is also true in the convex geometry of all upwards closed sets in the
order. Note that this especially means that the completeness of the
order semantics entails the completeness of the semantics in convex
geometries.

To understand the relation between the order semantics, the semantics in
convex geometries and the semantics for convexity between
finitely many points in the plane it might be helpful to think of an
analogy with the different semantics for the modal logic S4. Both,
preferential conditional logic and S4, have a relatively concrete
relational semantics in terms of partial orders for preferential
conditional logic and in terms of preorders, that are transitive and
reflexive relations, for S4. Both logics have an abstract spatial or
geometric semantics, the semantics in convex geometries for preferential
conditional logic and the semantics in topological spaces for S4. In
both cases the abstract semantics generalizes the relational semantics.
For preferential conditional logic this is done by considering the
upward closed sets in the partial order as a convex geometry. For S4 one
can also considers the upwards closed sets in a preorder, which form a
so called Alexandroff topology. Both logics additionally have a concrete
spatial or geometric semantics, over a finite set of points in the plane
for preferential conditional logic, and over the whole real line for S4.
In both cases proving completeness for the concrete spatial or geometric
semantics requires extra work. For preferential conditional logic this
is the construction mentioned in the second step above and in the case
of S4 it is the theorem of McKinsey and Tarski.

The semantics in convex geometries can also be seen as a
further development of premise semantics. The convex sets in our
semantics play the role of the complements of the sets of worlds in the
premise set of premise semantics. There is, however, a crucial
difference in the semantic clause with which a conditional is
interpreted in a family of sets of worlds. Motivated by linguistic
considerations premise semantics uses a quite sophisticated semantic
clause that is insensitive to closing the family of sets under
intersections. In \cite{Negri15,Girlando21} it is observed that for
developing proof systems for preferential conditional logic it is
beneficial to lift the implicit assumption that the family of sets of
worlds, relative to which the conditional is evaluated, is closed under
intersections. To achieve this they use a simplified semantic clause
from \cite{MartiPinosio14a} that is sensitive to closure under
intersections. When one uses the conditional with this semantic clause
relative to a family of sets of worlds that is not closed under
intersection different formulas turn out to be true than would be true
relative to the same family of sets of worlds using the semantic clause
from premise semantics. Hence, it is helpful to distinguish this new
setting from premise semantics and call it neighborhood semantics,
following the terminology form \cite{Negri15}.

This neighborhood semantics is also the starting point for the
categorical correspondence in \cite{MartiPinosio19}. Based on earlier
work on the theory of choice functions \cite{Koshevoy99,Johnson01} this
paper establishes a correspondence between finite Boolean algebras with
additional structure that encodes non-nested preferential conditional
logic and families of subsets of the set of atoms of these algebras. To
obtain a well-behaved correspondence it is necessary to allow for
families of sets that are not closed under intersections. However, one
can require closure under unions and a separation property that is dual
to the anti-exchange property mentioned above. If one then considers the
complements of all the sets in a such a family of sets then one obtains
a new family that is closed under arbitrary intersections and that has
the anti-exchange property. Thus, one gets a convex geometry.

The second step of the proof is to show that for every finite convex
geometry there is a finite subspace of the plane that satisfies the
same formulas in conditional logic. This step is not trivial because, as
we already explained above, not every finite convex geometry is
isomorphic to a subspace of some Euclidean space. However, following
\cite{Kashiwabara05}, there has recently been a lot of literature on
representing finite convex geometries inside of Euclidean spaces by some
more intricate construction than just selecting an isomorphic subspace
\cite{Czedli14,Czedli16,Richter17,Adaricheva19}. The main result of
\cite{Kashiwabara05}, for which \cite{Richter17} give a much shorter
proof, is that every finite convex geometry is isomorphic to the convex
geometry on a finite set of points in some Euclidean space, if we use an
alternative notion of convex set that is slightly different from the
standard notion of convex set. Moreover, \cite{Richter17} show that
every finite convex geometry is isomorphic to the convexity over a set
of polygons in the plane, using the standard notion of convexity, but
now every point in the original convex geometry corresponds to a whole
polygon in the plane. The papers \cite{Czedli14,Czedli16,Adaricheva19}
investigate to what extent it is possible to prove the same result using
circles instead of polygons.

In the second step of the completeness proof we make use of the
representation by \cite{Richter17}, where a finite convex geometry is
represented by a set of polygons. This construction is such that the
extreme points of any two polygons in the set are disjoint. One can thus
define a function that maps an extreme point of some polygon in the set
to the point in the original convex geometry that the polygon is
representing. The domain of this function can be considered to be the
finite subspace of the plane consisting of all the points that are an
extreme point of one of the polygons. The crucial insight is then that
this function is a strong morphism of convex geometries in a sense
defined in \cite{MartiPinosio19}, which guarantees the preservation of
true formulas in conditional logic.

The structure of this paper is as follows: In Section~\ref{convex
geometries} we review the notion of a convex geometry and fix the
necessary terminology. In Section~\ref{conditional} we present the
syntax of preferential conditional logic and define its semantics in
convex geometries. Section~\ref{completeness} contains a self contained
completeness result for preferential conditional logic with respect to
its semantics in convex geometries. In Section~\ref{morphisms} we
discuss the notion of morphism between finite convex geometries from
\cite{MartiPinosio19} that preserves the truth of all formulas in
conditional logic. In Section~\ref{representation} we show that the
representation of finite convex geometries in the plane from
\cite{Richter17} yields such a morphism. In Section~\ref{euclid} we put
the results from the previous sections together to prove the
completeness of preferential conditional logic with respect to convexity
between finite sets of points in the plane. Moreover, we show that this
result can not be improved to a completeness results with respect to
sets of points on the real line.

\section{Convex geometries}
\label{convex geometries}

In this section we recall some basic terminology and results related to
abstract convex geometries. For a more thorough introduction see
\cite{Edelman85,Adaricheva16} or \cite[ch.~3]{Korte91}

\subsection{Basic definitions}

A \emph{convex geometry} $(W,\C)$ is a set $W$ together with a family
$\C \subseteq \powerset W$ of \emph{convex sets} that has the following
properties:
\begin{enumerate}
 \item $\C$ is closed under arbitrary intersections, that is, $\bigcap
\X \in \C$ for all $\X \subseteq \C$.
 \item $\C$ has the \emph{anti-exchange property} that for every $C \in
\C$ and all $x,y \in W$ with $x,y \notin C$ and $x \neq y$ there is a $D
\in \C$ with $C \subseteq D$ such that $x \in D$ and $y \notin D$, or $x
\notin D$ and $y \in D$.
\end{enumerate}
We sometimes use just $W$, or just $\C$, to denote the convex geometry
$(W,\C)$ consisting of both $W$ and $\C$. Thereby it is assumed that the
identity of the other component is understood from the context.

Most authors require that $\emptyset \in \C$. We do not require this
because, as we explain in Remark~\ref{impossible worlds}, it is
convenient for the semantics of conditional logic to allow for convex
geometries in which the empty set is not convex.

We call the complements of convex sets \emph{feasible}, following the
literature on antimatroids \cite[ch.~3]{Korte91}. The family of all
feasible sets is denoted by $\F = \{W \setminus C \mid C \in \C\}$. We
use the notation $\comp{X} = W \setminus X$ to denote the complement of
some $X \subseteq W$.

Given any subset $X \subseteq W$ its \emph{convex hull} $\hull{X}
\subseteq W$ is defined as
\[
 \hull{X} = \bigcap \{C \in \C \mid X \subseteq C\}.
\]
Because convex sets are closed under intersection the convex hull
$\hull{X}$ is a convex set. In fact it is the least convex set
containing $X$. One can also show that as an operation on $\powerset W$
the convex hull $\hullSym : \powerset W \to \powerset W$ defines a
closure operator, meaning that $X \subseteq Y$ implies $\hull{X}
\subseteq \hull{Y}$, $X \subseteq \hull{X}$, and $\hull{\hull{X}}
\subseteq \hull{X}$ for all $X,Y \subseteq W$. The relation between the
family of convex sets and the convex hull is an instance of the
well-known correspondence between complete meet-semilattices and closure
operators.


For every subset $X \subseteq W$, where $(W,\C)$ is a convex geometry,
we define the \emph{relative convexity} on $X$ as follows: A set $C
\subseteq X$ is convex in the relative convexity if there is some set
$C'$ that is convex in $W$ such that $C = C' \cap X$. It is not hard to
see that the relative convexity is a convex geometry.

The prime example of convex geometries are the families of convex sets
in the Euclidean space $\R^n$ for every dimension $n$. A set $C
\subseteq \R^n$ is convex if it contains the complete line segment
between any two of its points. This means that for all $x,y \in C$ we
need $\{\lambda x + (1 - \lambda)y \mid \lambda \in [0,1]\} \subseteq
C$. We call the family of convex sets defined in this way the
\emph{standard convexity}. It is well know that the convex hull of a set
$X \subseteq \R^n$ in the standard convexity is the set of all convex
combinations of points in $X$, where a \emph{convex combination} of
$x_1,\dots,x_k \in X$ is any point that can be written as $\sum_{i =
1}^k \lambda_i x_i$, for $\lambda_1,\dots,\lambda_k \geq 0$ with
$\sum_{i = 1}^k \lambda_i = 1$.

Another example of convex geometries are partially ordered sets. Because
the standard semantics of conditional logic is usually defined over
partially ordered sets this example provides the link between convex
geometries and conditional logic. Recall that a partially ordered set,
or just \emph{poset}, is a set $W$ together with a partial order $\leq$
on $W$, where a partial order is a binary relation that is reflexive,
transitive and anti-symmetric. Given a partial order $\leq$ on $W$ we
define the \emph{upset convexity} $\upsets{\leq}$ on $W$ to consists of
all the \emph{upward closed sets} in $\leq$, that is, all the sets $C$
such that $x \in C$ and $x \leq y$ implies $y \in C$. The convex hull of
a set $X \subseteq W$ is then identical to the set $\upset{X} = \{y \in
W \mid x \leq y \mbox{ for some } x \in X\}$, which is the \emph{upwards
closure} of $X$. Note that $\upsets{\leq}$ is just the Alexandroff
topology associated to the order $\leq$. Closure under arbitrary
intersections is thus obvious. The anti-exchange property follows from
the $T_0$ separation property of any Alexandroff topology that is
defined from a poset. The reason that in this paper we assume that the
order semantics of conditional logic is based on posets instead of just
preorders is that the Alexandroff topology of a preorder that is not
anti-symmetric does not have the $T_0$ separation property and thus it
is not a convex geometry.

\subsection{Extreme points}

A point $x \in X$ in some set $X \subseteq W$ in a convex geometry
$(W,\C)$ is an \emph{extreme point} of $X$ if $x \notin \hull{X
\setminus \{x\}}$. The intuition is that an extreme point of $X$ is an
outermost point of the set $X$. The extreme points of a set in the upset
convexity of a poset are precisely the minimal elements of the set. We
write $\extreme{X} \subseteq X$ for the set of all of its extreme points
of $X$. The following proposition yields an alternative characterization
for the set of extreme points.
\begin{proposition} \label{extreme reformulation}
 For every $X \subseteq W$ we have $\extreme{X} = \bigcap \{Y \subseteq
X \mid X \subseteq \hull{Y}\}$.
\end{proposition}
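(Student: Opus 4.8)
The plan is to prove the two inclusions separately, using only that $\hullSym$ is a closure operator (in particular monotonicity, $X \subseteq Y$ implies $\hull{X} \subseteq \hull{Y}$, and inclusion, $X \subseteq \hull{X}$) together with the definition that $x$ is an extreme point of $X$ exactly when $x \notin \hull{X \setminus \{x\}}$. Write $R = \bigcap \{Y \subseteq X \mid X \subseteq \hull{Y}\}$ for the right-hand side. First I would note that $X$ itself belongs to the family being intersected, since $X \subseteq \hull{X}$; hence this family is nonempty and $R \subseteq X$.

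For the inclusion $\extreme{X} \subseteq R$, I would fix an extreme point $x \in \extreme{X}$ and an arbitrary $Y \subseteq X$ with $X \subseteq \hull{Y}$, and argue that $x \in Y$. Suppose instead that $x \notin Y$; then $Y \subseteq X \setminus \{x\}$, so monotonicity of the convex hull gives $\hull{Y} \subseteq \hull{X \setminus \{x\}}$. But $x \in X \subseteq \hull{Y}$, whence $x \in \hull{X \setminus \{x\}}$, contradicting that $x$ is extreme. Since $Y$ was an arbitrary member of the family, $x$ lies in each of them, so $x \in R$.

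For the reverse inclusion $R \subseteq \extreme{X}$, I would argue contrapositively: given $x \in X$ that is not extreme, I exhibit a member of the family that omits $x$. Take $Y = X \setminus \{x\}$. Then $X \setminus \{x\} \subseteq \hull{X \setminus \{x\}}$ by the inclusion property of the closure operator, while $x \in \hull{X \setminus \{x\}}$ precisely because $x$ fails to be extreme; together these give $X \subseteq \hull{Y}$. Thus $Y$ belongs to the family yet $x \notin Y$, so $x \notin R$. Combining the two inclusions yields the claimed equality.

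I do not expect a genuine obstacle here: the whole argument rests on the two closure-operator properties recalled just before the proposition together with the definition of extreme point, and each direction is a short contradiction or contrapositive. The only point requiring a moment's care is checking that the intersected family is nonempty, so that $R$ is a well-defined subset of $X$ rather than vacuously all of $W$; this is handled at the outset by observing that $X$ itself qualifies.
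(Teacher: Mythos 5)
Your proof is correct and takes essentially the same route as the paper's: both inclusions rest on monotonicity of the hull applied to $Y \subseteq X \setminus \{x\}$, and on the witness $Y = X \setminus \{x\}$ for a non-extreme point. The only differences are cosmetic --- you phrase the first inclusion as a contradiction rather than a contrapositive, and you make explicit the (correct) observation that the family being intersected contains $X$ itself, a detail the paper leaves implicit.
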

\begin{proof}
 For the contrapositive of the $\subseteq$-inclusion take $x \in X$ such
that there is some $Y \subseteq X$ with $x \notin Y$ and $X \subseteq
\hull{Y}$. Then $x \in \hull{Y} \subseteq \hull{X \setminus \{x\}}$ and
so $x$ is not an extreme point of $X$.

 For the contrapositive of the $\supseteq$-inclusion consider an $x \in
X$ with $x \in \hull{X \setminus \{x\}}$. Set $Y = X \setminus \{x\}$,
and observe that $X \subseteq \hull{Y}$ but $x \notin Y$.
\end{proof}

For finite sets one has the following relation between extreme points
and the convex hull operator.
\begin{theorem} \label{smooth theorem}
The following are equivalent for every finite set $K \subseteq W$ in a convex
geometry $\C$ on $W$:
\[
 \extreme{K} \subseteq X \iff K \subseteq \hull{K \cap X} \qquad \mbox{
for all } X \subseteq W.
\]
\end{theorem}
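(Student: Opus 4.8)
The plan is to treat the stated biconditional as two implications that must hold for every $X \subseteq W$ and to isolate the one point where finiteness of $K$ is really used. The implication from right to left needs no finiteness at all and follows directly from Proposition~\ref{extreme reformulation}: if $K \subseteq \hull{K \cap X}$, then, since $K \cap X$ is a subset of $K$, the set $K \cap X$ belongs to the family $\{Y \subseteq K \mid K \subseteq \hull{Y}\}$ over which that proposition takes an intersection, whence $\extreme{K} \subseteq K \cap X \subseteq X$.

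For the left-to-right implication I would first reduce it to a single containment independent of $X$. Assume $\extreme{K} \subseteq X$. Since extreme points of $K$ lie in $K$, we get $\extreme{K} \subseteq K \cap X$, and monotonicity of the closure operator gives $\hull{\extreme{K}} \subseteq \hull{K \cap X}$. Hence it suffices to prove the $X$-free statement
\[
 K \subseteq \hull{\extreme{K}},
\]
which is the finite Krein--Milman (or Minkowski) property for convex geometries: every finite set is the convex hull of its extreme points.

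I would prove $K \subseteq \hull{\extreme{K}}$ by induction on $|K|$; this is where finiteness is genuinely used. If $K = \extreme{K}$ there is nothing to show. Otherwise pick a non-extreme point $x \in K$, so $x \in \hull{K \setminus \{x\}}$. The key lemma, and the \emph{main obstacle}, is that deleting a non-extreme point does not disturb the extreme points, i.e.\ $\extreme{K \setminus \{x\}} = \extreme{K}$. Granting this, the induction hypothesis applied to the smaller set $K \setminus \{x\}$ gives $K \setminus \{x\} \subseteq \hull{\extreme{K \setminus \{x\}}} = \hull{\extreme{K}}$; together with $x \in \hull{K \setminus \{x\}} \subseteq \hull{\hull{\extreme{K}}} = \hull{\extreme{K}}$ (using idempotence of the hull) this yields $K \subseteq \hull{\extreme{K}}$.

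The lemma $\extreme{K \setminus \{x\}} = \extreme{K}$ is where the anti-exchange property enters, and I expect it to be the \emph{hardest part} to get right. The inclusion $\extreme{K} \subseteq \extreme{K \setminus \{x\}}$ is routine monotonicity. For the reverse, suppose $y \in \extreme{K \setminus \{x\}}$ but $y \notin \extreme{K}$; writing $A = K \setminus \{x,y\}$ this means $y \notin \hull{A}$ while $y \in \hull{A \cup \{x\}}$ (non-extremality of $y$ in $K$), and $x \in \hull{A \cup \{y\}}$ (non-extremality of $x$), and one checks $x \notin \hull{A}$ as well (otherwise $\hull{A \cup \{x\}} = \hull{A}$, forcing $y \in \hull{A}$). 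Now the hull-reformulation of anti-exchange --- for $x \neq y$ with $x,y \notin \hull{A}$, if $y \in \hull{A \cup \{x\}}$ then $x \notin \hull{A \cup \{y\}}$, which I would first derive from the set-theoretic anti-exchange property by applying it to the convex set $\hull{A}$ --- contradicts $x \in \hull{A \cup \{y\}}$. This contradiction establishes the lemma and completes the proof.
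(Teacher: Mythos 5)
Your proof is correct, but it takes a genuinely different route from the paper: the paper does not prove Theorem~\ref{smooth theorem} from first principles at all, but obtains it by citing item~3 of Theorem~1 in \cite{MartiPinosio19} together with the observation that every finite set is ``smooth'' in the terminology of that paper. Your argument is instead self-contained, and it cleanly isolates where each hypothesis enters: finiteness is used only for the Minkowski--Krein--Milman property $K \subseteq \hull{\extreme{K}}$, proved by induction on $|K|$, while anti-exchange is used only in the deletion lemma $\extreme{K \setminus \{x\}} = \extreme{K}$ for non-extreme $x \in K$. All the steps check out: the right-to-left direction indeed needs no finiteness and is immediate from Proposition~\ref{extreme reformulation}; the reduction of left-to-right to the $X$-free containment via monotonicity is sound; the exclusion $x \notin \hull{A}$ is verified correctly; and your derivation of the closure-operator form of anti-exchange from the set form in the paper works, since any convex $D \supseteq \hull{A}$ containing $x$ must contain $\hull{A \cup \{x\}}$, so a set separating $x$ from $y$ over $\hull{A}$ rules out one of the two memberships $y \in \hull{A \cup \{x\}}$, $x \in \hull{A \cup \{y\}}$. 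What the paper's citation buys is brevity and placement in a more general theory, where smoothness is the right condition for infinite sets as well; what your proof buys is independence from the external reference for this step, making explicit that the theorem is really the finite Krein--Milman property of convex geometries in disguise, at the cost of roughly a page of additional argument.
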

\begin{proof}
 This follows from item~3 of Theorem~1 in \cite{MartiPinosio19} and the
observation that every finite set is smooth in the terminology of that
paper. Note that the proof of this theorem uses the characterization
from Proposition~\ref{extreme reformulation} as the definition of the
extreme points.
\end{proof}

Lastly, we define the notion of a polygon. A \emph{polygon} $P \subseteq
W$ in a convex geometry $(W,\C)$ is any set that can be written of the
form $P = \hull{P'}$ for a finite set $P' \subseteq P$. Clearly every
such polygon has only a finite number of extreme points because for any
$x \in P$ with $x \notin P'$ we have that $x \in \hull{P'} \subseteq
\hull{P \setminus \{x\}}$.

\section{Conditional logic}
\label{conditional}

In this section we discuss the syntax of preferential conditional logic
that we use in this paper and explain its semantics in convex
geometries.

\subsection{Syntax of one-step preferential conditional logic}

Conditional logics are commonly formulated in a classical propositional
modal language with one binary modality $\cond$, which forms the
\emph{conditional} $\varphi \cond \psi$ with \emph{antecedent} $\varphi$
and \emph{consequent} $\psi$ \cite{Lewis73,Chellas75}. That $\cond$ is a
modality means that one can nest conditionals, as for example in the
formula $(((p \cond q) \cond r) \land q) \rightarrow r$. In this paper
we choose not to deal with the complications arising from nested
conditionals and instead just work with one-step formulas that are
Boolean combination of conditionals over propositional formulas. This is
not a substantial restriction for most conditional logics, because the
axiomatizations of these logics constrain only one layer of conditionals
and then are extended freely to formulas of larger conditional depth.
Readers familiar with coalgebraic modal logic might recognize this as
the one-step setup that is common in coalgebraic logic \cite{Kupke11}.
We sketch in Remarks \ref{not one-step}~and~\ref{nested completeness}
below how one would extend our semantics and completeness result to
formulas with nested conditionals.

To be more precise about our setting fix a set $\Prop$ of propositional
letters and consider the grammar
\begin{align*}
 \varphi_0 & ::= p \mid \neg \varphi_0 \mid \varphi_0 \land \varphi_0, &
\mbox{where } p \in \Prop, \\
 \varphi_1 & ::= \varphi_0 \cond \varphi_0 \mid \neg \varphi_1 \mid
\varphi_1 \land \varphi_1.
\end{align*}
Let $\Lang_0$ be the set of formulas generated from $\varphi_0$ and
$\Lang_1$ the set of formulas generated from $\varphi_1$. Note that
$\Lang_0$ is just the language of classical propositional logic. In both
$\Lang_0$ and $\Lang_1$ we use further Boolean connectives, such as
$\lor$, $\rightarrow$, and $\leftrightarrow$, as abbreviations with their
usual meaning in classical logic. To omit parenthesis we assume that
$\neg$ binds stronger than $\land$ and $\lor$, which in turn bind
stronger than $\cond$, $\rightarrow$ and $\leftrightarrow$.

In our axiomatization of preferential conditional logic we follow the
one-step setup in that we only consider proofs in which all formulas are
either from $\Lang_0$ or from $\Lang_1$. Hence, proofs are not
allowed to contain nested conditionals or formulas with conditionals
that contain propositional letters that are not in the scope of a
conditional.

As axioms we allow all instances of propositional tautologies in
$\Lang_0$ plus the following axioms that are in $\Lang_1$:
\begin{align*}
 \mbox{(Id)} \ & p \cond p, &
 \mbox{(And)} \ & (p \cond q) \land (p \cond r) \rightarrow (p \cond q
\land r), \\
 \mbox{(CM)} \ & (p \cond q) \land (p \cond r) \rightarrow (p \land r
\cond q), &
 \mbox{(Or)} \ & (p \cond q) \land (r \cond q) \rightarrow (p \lor r
\cond q).
\end{align*}
We have the following inference rules: First, modus ponens, where the
premises are either both in $\Lang_0$ or both in $\Lang_1$; second,
uniform substitution $\varphi/\varphi[\sigma]$, where either $\varphi
\in \Lang_0$ and $\sigma : \Prop \to \Lang_i$ for some $i \in \{0,1\}$,
or $\varphi \in \Lang_1$ and $\sigma : \Prop \to \Lang_0$; and third, we
have the following two inference rules, with premises in $\Lang_0$ and
conclusions in $\Lang_1$:
\[
  \mbox{(LLE)} \ \frac{\varphi \leftrightarrow \chi}{(\varphi \cond \psi)
\leftrightarrow (\chi \cond \psi)}, \qquad \mbox{and} \qquad
 \mbox{(RW)} \ \frac{\psi \rightarrow \chi}{(\varphi \cond \psi) \rightarrow
(\varphi \cond \chi)}.
\]
As is common in Hilbert-style axiomatizations we understand these rules
such that the conclusion is derivable whenever the premises are
derivable. In the derivation system given here there is no notion of a
proof with open assumptions, and the rules (LLE) and (RW) would no
longer be sound for proofs with open assumptions.

We use the standard notions of derivability and consistency for formulas
in either $\Lang_0$ or $\Lang_1$ with respect to the above axiomatic
system. We also write $\vdash \varphi$ if some $\varphi \in \Lang_i$ for
some $i \in \{0,1\}$ is derivable

The axioms and rules given here and their names closely follow the rules
of System~P in the literature on nonmonotonic consequence relations
\cite{Kraus90}. It is however easy to show that these rules and axioms
are interderivable with the rules and axioms from \cite{Burgess81} or
\cite{Veltman85}.

The following proposition gathers examples of derivable formulas and
rules.
\begin{proposition} \label{derivable}
 The following formulas are derivable in preferential conditional logic:
\begin{align*}
 \mbox{\normalfont (WCM)} \ & (p \cond q \land r) \rightarrow (p \land q
\cond r), &
 \mbox{\normalfont (CCut)} \ & (p \cond q) \land (p \land q \cond r)
\rightarrow (p \cond  r), \\
 \mbox{\normalfont (S)} \ & (p \land q \cond r) \rightarrow (p \cond
\neg q \lor r), &
 \mbox{\normalfont (CCut')} \ & (p \cond q) \land (q \cond r)
\rightarrow (p \lor q \cond r). \\
\end{align*}
 The following rule is derivable in preferential conditional logic:
\[
  \mbox{\normalfont (R)} \ \frac{\psi \rightarrow \chi}{(\varphi \cond
\psi) \rightarrow ((\varphi \land \chi) \lor \psi  \cond \psi)}.
\]
\end{proposition}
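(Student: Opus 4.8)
The plan is to derive each item by chaining the four axioms (Id), (And), (CM), (Or) together with the rules (RW) and (LLE), and gluing the resulting implications by ordinary propositional reasoning (modus ponens inside $\Lang_1$). The derivations are not independent: I would first establish (S), then use it to obtain (CCut), and finally use (CCut) to obtain (CCut'); the formula (WCM) and the rule (R) can be treated separately. I would start with (WCM), which is immediate: from $p \cond q \land r$ two applications of (RW) along the tautologies $q \land r \rightarrow q$ and $q \land r \rightarrow r$ give $p \cond q$ and $p \cond r$, while the instance of (CM) obtained by swapping $q$ and $r$ reads $(p \cond r) \land (p \cond q) \rightarrow (p \land q \cond r)$, delivering $p \land q \cond r$.

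For (S) the key idea is a case split on $q$ implemented through (Or). Writing the antecedent $p$ as $(p \land q) \lor (p \land \neg q)$, I would derive $p \land q \cond \neg q \lor r$ from the hypothesis $p \land q \cond r$ by (RW), and $p \land \neg q \cond \neg q \lor r$ from the (Id)-instance $p \land \neg q \cond p \land \neg q$ again by (RW). Then (Or) yields $(p \land q) \lor (p \land \neg q) \cond \neg q \lor r$, and (LLE) rewrites the antecedent to $p$, giving $p \cond \neg q \lor r$.

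With (S) in hand the rest is assembly. For (CCut): from $p \land q \cond r$ I get $p \cond \neg q \lor r$ by (S), combine it with the hypothesis $p \cond q$ using (And) to obtain $p \cond q \land (\neg q \lor r)$, and finish with (RW) along the tautology $q \land (\neg q \lor r) \rightarrow r$. For (CCut'): the instance of (Or) applied to $p \cond q$ and $q \cond q$ (from (Id)) gives $p \lor q \cond q$; since $(p \lor q) \land q$ is equivalent to $q$, (LLE) turns the hypothesis $q \cond r$ into $(p \lor q) \land q \cond r$, so the instance of (CCut) with antecedent $p \lor q$ and middle formula $q$ yields $p \lor q \cond r$. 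For the rule (R): given a derivation of $\psi \rightarrow \chi$, I use (RW) to pass from $\varphi \cond \psi$ to $\varphi \cond \chi$, apply (CM) to get $\varphi \land \chi \cond \psi$, and combine this with $\psi \cond \psi$ (from (Id)) via (Or) to reach $(\varphi \land \chi) \lor \psi \cond \psi$.

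I expect the main obstacle to be (CCut). The naive route, reading $p \cond r$ off from $p \cond q$ and $q \cond r$ and then using (Or), fails because the logic is not transitive, and a direct application of (CM) or (And) to the two hypotheses does not line up the antecedents. The crux is recognizing that (S) is the right intermediate lemma, and that (S) itself is best obtained by the $(p \land q) \lor (p \land \neg q)$ case split through (Or); once (S) is available, the remaining derivations are routine bookkeeping.
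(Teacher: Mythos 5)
Your proposal is correct and follows essentially the same route as the paper's own proof: (WCM) via two applications of (RW) plus the (CM) instance with $q$ and $r$ swapped, (S) via the $(p \land q) \lor (p \land \neg q)$ case split with (Id), (RW), (Or) and (LLE), then (CCut) from (S) with (And) and (RW), (CCut') from (Or)/(Id), (LLE) and (CCut), and (R) from (RW), (CM), (Id) and (Or). The only difference is expository — your description of (R) (deriving $\varphi \cond \chi$ before applying (CM)) is stated slightly more cleanly than the paper's, but the underlying derivation is identical.
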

\begin{proof}
 Derivation of (WCM): With (RW) we obtain that $(p \cond q \land r)
\rightarrow (p \cond q)$ and $(p \cond q \land r) \rightarrow (p \cond
r)$ are derivable. Because by (CM) the formula $(p \cond q) \land (p
\cond r) \rightarrow (p \land r \cond q)$ is an axiom we can then use
propositional reasoning to derive $(p \cond q \land r) \rightarrow (p
\land q \cond r)$.

 In the remaining derivations we omit the steps that are propositional
and focus on the axioms or rules involving the conditional. We are
confident that the reader is able to supply the missing details. As an
example a short description of the above derivation of (WCM) would be as
follows: From $p \cond q \land r$ we can derive with the help of (RW)
that $p \cond q$ and that $p \cond r$. With (CM) it follows that $p
\land q \cond q$.

 Derivation of (S): First observe that from (Id) we get that $p \land
\neg q \cond p \land \neg q$ and with (RW) we obtain $p \land \neg q
\cond \neg q \lor r$. Then use (RW) again to obtain $p \land q \cond
\neg q \lor r$ from $p \land q \cond r$. We can use (Or) to get $(p \land q)
\lor (p \land \neg q) \cond \neg q \lor r$. By (LLE) we obtain $p \cond
\neg q \lor r$.

 Derivation of (CCut): From $p \land q \cond r$ it follows by (S) that
$p \cond \neg q \lor r$. Combining this with the assumption $p \cond q$
using (And) we obtain $p \cond (\neg q \lor r) \land q$. By (RW) follows
that $p \cond r$ because $(\neg q \lor r) \land q \rightarrow r$ is a
theorem of classical propositional logic.

 Derivation of (CCut'): First derive $p \lor q \cond q$ using (Or), (Id)
and the assumption $p \cond q$. Then observe that by (LLE) we obtain $(p
\lor q) \land q \cond r$ from the assumption $q \cond r$. Then apply
(CCut) to $p \lor q \cond q$ and $(p \lor q) \land q \cond r$,
substituting the letter $p$ in (CCut) with $p \lor q$. This yields $p
\lor q \cond r$.

 Derivation of (R): Because of the premise that $\psi \rightarrow \chi$
we obtain $\psi \cond \chi$ because of (RW) and the instance $\psi \cond
\psi$ of (Id). Applying (CM) to $\varphi \cond \psi$ and $\psi \cond
\chi$ yields $\varphi \land \chi \cond \psi$. Because $\psi \cond \psi$
holds by (Id) we can use (Or) to get $(\varphi \land \chi) \lor \psi
\cond \psi$.
\end{proof}

\subsection{Semantics of the conditional in convex geometries}
\label{abstract semantics}

To give a semantics to the conditional we are using models that are
based on abstract convex geometries as defined in Section~\ref{convex
geometries}. Thus, we define a \emph{model} $M = (W,\C,V)$ to consist of
\begin{itemize}
 \item a set $W$, whose elements are called \emph{points} or \emph{worlds},
 \item a convex geometry $\C \subseteq \powerset W$ over $W$, and
 \item a function $V : \Prop \to \powerset W$ that is called the
\emph{valuation function}.
\end{itemize}
As is usual in modal logics the valuation function $V$ is used to
assign meanings to the propositional letters in $\Prop$. This assignment
of meanings is extended to propositional formulas from $\Lang_0$ in the
standard way with the recursive clauses
\[
 \ext{p}_V = V(p), \qquad \ext{\neg \varphi}_V = W \setminus
\ext{\varphi}_V, \qquad \mbox{and} \qquad \ext{\varphi \land \psi}_V =
\ext{\varphi}_V \cap \ext{\psi}_V.
\]
We often write $\ext{\varphi}$ for $\ext{\varphi}_V$ if $V$ is clear
from the context.

We use the standard clauses for the propositional connectives over
$\Lang_1$ relative to the model $M = (W,\C,V)$:
\[
 M \models \neg \varphi \iff \mbox{not } M \models \varphi, \qquad
\mbox{and} \qquad M \models \varphi \land \psi \iff M \models \varphi
\mbox{ and } M \models \psi.
\]
The conditional has the following semantics:
\begin{tabbing}
 $M \models \varphi \cond \psi$ \quad iff \quad \= for all $C \in \C$
with $\ext{\varphi} \nsubseteq C$ there is a $D \in \C$ \\
 \> with $C \subseteq D$ and $\ext{\varphi} \nsubseteq D$ such that
$\ext{\varphi} \subseteq D \cup \ext{\psi}$.
\end{tabbing}
The truth of formulas in $\Lang_1$ is only relative to the model $M$ and
does not need to be relativized to a world of evaluation. This is
possible because we do not nest conditionals and all propositional
letters that occur in a formula from $\Lang_1$ need to be in the scope
of some conditional.

We use the standard notion of validity, calling a formula $\varphi \in
\Lang_1$ \emph{valid} iff $M \models \varphi$ for all models $M$. As
usual in modal logic we also call a formula valid over a class of models
or convex geometries if it is true in all models from this class or it
is true in all models that are based on a convex geometry from the
class.

Preferential conditional logic is sound for this semantics. Note that
the proof of soundness never uses the special properties of the convex
geometry $\C \subseteq \powerset W$. Soundness already holds for
arbitrary families of sets.
\begin{proposition}
 If $\varphi \in \Lang_1$ is derivable in preferential conditional logic
then $\varphi$ is valid.
\end{proposition}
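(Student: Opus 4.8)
The plan is to prove soundness by the standard route: show that every axiom is valid and that every inference rule preserves validity, so that by induction on the length of a derivation every derivable formula is valid. Since the propositional tautologies and the rules modus ponens and uniform substitution are handled exactly as in classical propositional modal logic, the real work lies in the conditional axioms (Id), (And), (CM), (Or) and the two conditional rules (LLE) and (RW). For each of these I would fix an arbitrary model $M = (W,\C,V)$ and unfold the semantic clause for $\cond$, remembering (as the text stresses) that no special property of $\C$ beyond its being a family of sets is needed.

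\textbf{Validity of the axioms.}
For (Id) the antecedent and consequent have the same extension $\ext{p}$, so the condition $\ext{p} \subseteq D \cup \ext{p}$ holds trivially for any witnessing $D$; in fact one can take $D = C$ itself whenever $\ext{p} \nsubseteq C$, since then $\ext{p} \subseteq C \cup \ext{p}$. For (RW), suppose $\psi \rightarrow \chi$ is valid, so $\ext{\psi} \subseteq \ext{\chi}$ in every model; if $M \models \varphi \cond \psi$, then for each $C$ with $\ext{\varphi} \nsubseteq C$ the witness $D$ supplied for $\psi$ already satisfies $\ext{\varphi} \subseteq D \cup \ext{\psi} \subseteq D \cup \ext{\chi}$, so the same $D$ witnesses $\varphi \cond \chi$. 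For (LLE) one uses that $\varphi \leftrightarrow \chi$ valid gives $\ext{\varphi} = \ext{\chi}$, so the two semantic conditions coincide verbatim. The axioms (And), (CM) and (Or) require combining the witnesses provided by the two conjuncts in the antecedent: for (And) and (CM) I expect to intersect or reuse the witness sets for $p \cond q$ and $p \cond r$ and check the resulting containment, while for (Or), given witnesses for $p \cond q$ and $r \cond q$ separately, I would show they combine to witness $p \lor r \cond q$ using $\ext{p \lor r} = \ext{p} \cup \ext{r}$.

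\textbf{The inference rules.}
Modus ponens and the Boolean connectives are immediate from the clauses for $\neg$ and $\land$ over $\Lang_1$, and uniform substitution is handled by the usual observation that substitution instances of valid formulas are valid, noting the restriction in the excerpt that substitutions from $\Lang_0$ or $\Lang_1$ replace propositional letters appropriately so that the resulting formula stays within the one-step language. The only subtlety is that (LLE) and (RW) are rules whose premises lie in $\Lang_0$: here one reads ``premise derivable'' as ``premise valid'' via the propositional part of the induction hypothesis, i.e. $\ext{\varphi} \leftrightarrow \ext{\chi}$ or $\ext{\psi} \subseteq \ext{\chi}$ holds in every model, which is exactly what the two axiom arguments above used.

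\textbf{Main obstacle.}
I expect the combinatorial bookkeeping of witnesses in (And), (CM) and (Or) to be the only place demanding care, since these are the axioms whose semantic verification genuinely manipulates the witnessing convex sets $D$ rather than reusing $C$ or a single given witness. In particular (CM), which moves from $p \cond q$ and $p \cond r$ to $p \land r \cond q$, changes the antecedent extension from $\ext{p}$ to $\ext{p} \cap \ext{r}$, so I would need to argue that any $C$ failing to contain $\ext{p \land r}$ can be handled by a witness derived from the $p \cond q$ data, using that $\ext{p \land r} \subseteq \ext{p}$. Everything else is a routine unfolding of definitions, and since soundness is asserted to hold for arbitrary families of sets, no appeal to closure under intersection or the anti-exchange property is anticipated anywhere in the argument.
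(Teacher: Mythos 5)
Your overall strategy (induction on the length of derivations, verifying each axiom and checking that each rule preserves validity) is exactly the paper's, and your treatment of (Id), (LLE), (RW), modus ponens and substitution is correct. The gap is in the three cases you defer, (And), (CM) and (Or), which are precisely where the content of the soundness proof lies (the paper singles out (Or) as the one case worth writing out). The combination scheme you propose does not work. For (Or): if from $\ext{p} \nsubseteq C$ and $\ext{r} \nsubseteq C$ you obtain witnesses $D_1$ (for $p \cond q$) and $D_2$ (for $r \cond q$) independently, then $D_1 \cap D_2$ need not satisfy $\ext{p} \subseteq (D_1 \cap D_2) \cup \ext{q}$ (that containment is only guaranteed with $D_1$ itself), and $D_1 \cup D_2$ need not be in $\C$ nor satisfy $\ext{p \lor r} \nsubseteq D_1 \cup D_2$. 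For (CM) your specific plan is demonstrably insufficient: the witness $D$ obtained from the $p \cond q$ data at $C$ guarantees $\ext{p} \nsubseteq D$, but $D$ may very well contain all of $\ext{p} \cap \ext{r}$, so the required condition $\ext{p \land r} \nsubseteq D$ can fail; the hypothesis $p \cond r$ must genuinely be used.

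The missing idea, which is how the paper handles (Or), is to chain the hypotheses sequentially rather than combine their witnesses: apply the first hypothesis to $C$ to get $C' \supseteq C$, then apply the second hypothesis \emph{to $C'$} (not to $C$) to get $C'' \supseteq C'$, and check that $C''$ is the desired witness. The containments established at the earlier stage survive because $C' \subseteq C''$, and the non-containment delivered at the last stage supplies the rest. For instance, for (CM): from $\ext{p} \nsubseteq C$ use $p \cond r$ to get $D_1 \supseteq C$ with $\ext{p} \nsubseteq D_1$ and $\ext{p} \subseteq D_1 \cup \ext{r}$, then use $p \cond q$ on $D_1$ to get $D_2 \supseteq D_1$ with $\ext{p} \nsubseteq D_2$ and $\ext{p} \subseteq D_2 \cup \ext{q}$; any $x \in \ext{p} \setminus D_2$ lies outside $D_1$, hence in $\ext{r}$, so $\ext{p \land r} \nsubseteq D_2$, and $\ext{p \land r} \subseteq D_2 \cup \ext{q}$ is inherited. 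In the (Or) case one additionally needs the case distinction on whether $\ext{r} \subseteq C'$ before deciding whether a second application is needed. Without this chaining argument your proof has a genuine hole exactly at the axioms you flagged as the main obstacle; note also that, consistent with your own remark, the chaining uses no closure under intersections, whereas your proposed intersection of witnesses would.
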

\begin{proof}
 One first shows, analogous to the soundness of propositional logic,
that if $\vdash \varphi$ for some $\varphi \in \Lang_0$ then
$\ext{\varphi}_V = W$ for all valuations $V : \Prop \to \powerset W$.
Using this one can show the statement of the proposition with a routine
induction on the length of derivations in the axiomatic system. Here we
only treat the case of the axiom (Or) and leave all other cases, which
are easier, to the reader.

Consider any model $M = (W,\C,V)$. We want to show that $M \models (p
\cond q) \land (r \cond q) \rightarrow (p \lor r \cond q)$. Assume that
$M \models p \cond q$ and that $M \models r \cond q$. To show $M \models
p \lor r \cond q$ consider any convex $C \in \C$ such that $\ext{p \lor
r}_V \nsubseteq C$. We need to find a convex $D \in \C$ with $C
\subseteq D$, $\ext{p \lor r}_V \nsubseteq D$ and $\ext{p \lor r}_V
\subseteq D \cup \ext{q}_V$. Because $\ext{p \lor r}_V \nsubseteq C$ it
follows that either $\ext{p}_V \nsubseteq C$ or $\ext{r}_V \nsubseteq
C$. Consider the case where $\ext{p}_V \nsubseteq C$. The reasoning in
the other case where $\ext{r}_V \nsubseteq C$ is completely analogous.
Because $M \models p \cond q$ it follows from $\ext{p}_V \nsubseteq C$
that there is some $C' \in \C$ with $C \subseteq C'$, $\ext{p}_V
\nsubseteq C'$ and $\ext{p}_V \subseteq C' \cup \ext{q}_V$. Then
distinguish cases depending on whether $\ext{r}_V \subseteq C'$.

If $\ext{r}_V \subseteq C'$ then we can let $D = C'$ because $\ext{p
\lor r}_V \nsubseteq C'$ follows from $\ext{p}_V \nsubseteq C'$ and
$\ext{p \lor r}_V \subseteq C' \cup \ext{q}_V$ follows from $\ext{p}_V
\subseteq C' \cup \ext{q}_V$ together with $\ext{r}_v \subseteq C'$.

If $\ext{r}_V \nsubseteq C'$ then we can apply the assumption $M \models
r \cond q$ to obtain a $C'' \in \C$ with $C' \subseteq C''$, $\ext{r}_V
\nsubseteq C''$ and $\ext{r}_V \subseteq C'' \cup \ext{q}_V$. We can let
$D = C''$. It clearly holds that $C \subseteq C''$. That $\ext{p \lor
r}_V \nsubseteq C''$ follows from $\ext{r}_V \nsubseteq C''$. Lastly, it
holds that $\ext{p \lor r}_V \subseteq C'' \cup \ext{q}_V$ because
$\ext{p}_V \subseteq C' \cup \ext{q}_V$, $C' \subseteq C''$ and
$\ext{r}_V \subseteq C'' \cup \ext{q}_V$
\end{proof}

If we allow $\C \subseteq \powerset W$ to be an arbitrary family of sets
then our semantics is equivalent to the neighborhood semantics that has
already been used in the literature
\cite{MartiPinosio14a,Negri15,Girlando21}. Thus the semantics in convex
geometries specializes the neighborhood semantics for preferential
conditional logic.
To see why our semantics specializes neighborhood semantics let us
dualize the semantic clause such that it is expressed in terms of the
family $\F$ of feasible sets. It then becomes the clause
\begin{tabbing}
 $M \models \varphi \cond \psi$ \quad iff \quad \= for all $F \in \F$
with $F \cap \ext{\varphi} \neq \emptyset$ there is a $G \in \F$ \\
 \> with $G \subseteq F$ and $G \cap \ext{\varphi} \neq \emptyset$ such
that $G \cap \ext{\varphi} \subseteq \ext{\psi}$.
\end{tabbing}
This clause is precisely the same as the clause that is used for
arbitrary families $\F \subseteq \powerset W$ in
\cite{MartiPinosio14a,Negri15,Girlando21}. It can be traced back to much
earlier approaches in premise semantics
\cite{Veltman76,Kratzer81,Benthem11} and can also be seen as the
generalization of the clause from \cite{Girard07} to the infinite case.

\begin{remark} \label{not one-step}
 By making the convex geometry in a model depending on the world of
evaluation, one can extend our semantics to deal with nested
conditionals. This means that we would consider models of the form $M =
(W,\C,V)$, where $\C : W \to \powerset \powerset W$ is such that $\C(w)$
is a convex geometry for all $w \in W$. The conditional is then
evaluated relative to a world $w$ by using the above clause relative to
the convex geometry $\C(w)$. In \cite{Negri15,Girlando21} this kind of
semantics is used, however, with the dualized semantic clause and without
requiring that $\C(w)$ is a convex geometry.
\end{remark}

\begin{remark} \label{impossible worlds}
 Observe that if in some model $M = (W,\C,V)$ we have that
$\ext{\varphi} \subseteq C$ for all $C \in \C$ then $M \models \varphi
\cond \bot$. In this sense the worlds in $\bigcap \C$ can be thought of
as impossible worlds. We do not require that $\emptyset \in \C$ because
we want to allow $W$ to contain such impossible worlds. For the results
of this paper this is not crucial because, as we argue in
Proposition~\ref{getting rid of impossible worlds} below, impossible
worlds can always be eliminated from $W$, without changing the set of
true conditionals. In more complex settings, such as the nested
semantics from Remark~\ref{not one-step} or the duality results from
\cite{MartiPinosio19}, it is however convenient to allow for impossible
worlds.
\end{remark}

If the antecedent of a conditional $\varphi \cond \psi$ evaluates to a
finite set $\ext{\varphi}$ then the semantic clause for the conditional
can be simplified.
\begin{proposition} \label{simpler clause}
 For any model $M = (W,\C,V)$ and $\varphi, \psi \in \Lang_0$ such that
$\ext{\varphi} \subseteq W$ is finite the following are equivalent
\begin{enumerate}
 \item $M \models \varphi \cond \psi$, \label{standard clause}
 \item $\extreme{\ext{\varphi}} \subseteq \ext{\psi}$, and
\label{extreme clause}
 \item $\ext{\varphi} \subseteq \hull{\ext{\varphi} \cap \ext{\psi}}$.
\label{closure clause}
\end{enumerate}
\end{proposition}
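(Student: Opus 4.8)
The plan is to dispatch the cheap equivalence first and then isolate the genuine content. Write $K = \ext{\varphi}$, which is finite by hypothesis, $P = \ext{\psi}$, and $B = K \setminus P$. The equivalence of conditions~\ref{extreme clause} and~\ref{closure clause} is immediate: it is exactly Theorem~\ref{smooth theorem} instantiated with $X = \ext{\psi}$, since $\extreme{K} \subseteq \ext{\psi}$ and $K \subseteq \hull{K \cap \ext{\psi}}$ are the two sides of that biconditional. All the real work therefore lies in connecting the semantic clause~\ref{standard clause} to condition~\ref{extreme clause}.

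For~\ref{standard clause}$\Rightarrow$\ref{extreme clause} I would argue contrapositively. Suppose $\extreme{K} \nsubseteq P$ and fix an extreme point $e \in \extreme{K}$ with $e \notin P$. By definition $e \notin \hull{K \setminus \{e\}}$, and since the convex hull is the intersection of all convex supersets there is a convex set $C$ with $K \setminus \{e\} \subseteq C$ and $e \notin C$; in particular $K \nsubseteq C$. This $C$ refutes the semantic clause: any convex $D \supseteq C$ with $K \nsubseteq D$ satisfies $K \setminus \{e\} \subseteq C \subseteq D$, so the only point of $K$ that $D$ can omit is $e$, and since $e \notin P$ we get $e \notin D \cup P$, hence $K \nsubseteq D \cup P$. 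Thus no admissible $D$ exists and $M \not\models \varphi \cond \psi$.

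The converse is where the difficulty sits, and I would begin by reformulating the clause for finite $K$. Since $K \subseteq D \cup P$ is the same as $B \subseteq D$, and since among all convex $D$ with $C \subseteq D$ and $B \subseteq D$ the smallest is $\hull{C \cup B}$, one checks that $M \models \varphi \cond \psi$ holds if and only if $K \nsubseteq \hull{C \cup B}$ for every convex $C$ with $K \nsubseteq C$. It then remains to prove the following separation statement, which I expect to be the main obstacle: if $\extreme{K} \cap B = \emptyset$ (equivalently condition~\ref{extreme clause}) and $C$ is convex with $K \subseteq \hull{C \cup B}$, then already $K \subseteq C$.

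I would prove this by taking a counterexample with $|B|$ minimal and deriving a contradiction from the anti-exchange property. If $B = \emptyset$ then $K \subseteq \hull{C} = C$, so $B \neq \emptyset$; fix $b \in B$, which is non-extreme, so $b \in \hull{K \setminus \{b\}}$. Minimality forces $K \nsubseteq \hull{C \cup (B \setminus \{b\})}$, and writing $A = C \cup (B \setminus \{b\})$ one verifies $b \notin \hull{A}$ while $K \subseteq \hull{A \cup \{b\}} = \hull{C \cup B}$. A short check shows that every point of $K \setminus \hull{A}$ other than $b$ itself lies in $K \setminus B$; feeding $b \in \hull{K \setminus \{b\}}$ through this observation gives $b \in \hull{A \cup L}$ for the finite set $L$ of these lost points, even though $b \notin \hull{A}$. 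Adding the points of $L$ to $A$ one at a time until $b$ first enters the hull isolates a set $S$ and a point $\ell^*$ with $b \notin \hull{A \cup S}$ but $b \in \hull{A \cup S \cup \{\ell^*\}}$; the closure-operator form of the anti-exchange property, which follows from the set-theoretic form stated in Section~\ref{convex geometries}, then yields $\ell^* \notin \hull{A \cup S \cup \{b\}} = \hull{C \cup B \cup S}$. This contradicts $\ell^* \in K \subseteq \hull{C \cup B}$. The only delicate points are verifying that the lost points other than $b$ avoid $B$ and checking the single anti-exchange step; everything else is routine closure-operator manipulation.
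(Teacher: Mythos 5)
Your proposal is correct, but in the hard direction it follows a genuinely different route from the paper's. The agreement is on the cheap parts: the paper also gets the equivalence of conditions (2) and (3) from Theorem~\ref{smooth theorem}, and its easy direction, (1)$\Rightarrow$(3), is a two-line contradiction with the witnessing set $D$, comparable in weight to your contrapositive proof of (1)$\Rightarrow$(2); neither needs finiteness or anti-exchange. For the converse, however, the paper argues directly from the semantic clause: assuming $\ext{\varphi} \subseteq \hull{\ext{\varphi} \cap \ext{\psi}}$ while the clause fails, it iterates the set-theoretic form of anti-exchange to build a strictly increasing chain $C_0 \subset C_1 \subset \dots$ of convex sets, where at each step a point $x \in \ext{\varphi} \cap \ext{\psi}$ and a point $y \in \ext{\varphi} \setminus \ext{\psi}$, both outside $C_i$, get separated, so each step captures a fresh point of $\ext{\varphi}$; this contradicts finiteness. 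You instead reformulate the clause as a separation statement about hulls ($K \subseteq \hull{C \cup B}$ implies $K \subseteq C$, with $B = K \setminus P$) and prove it by a minimal-counterexample induction on $|B|$ culminating in a single application of the exchange (closure-operator) form of anti-exchange. Your two ``delicate points'' do check out: the lost points other than $b$ avoid $B$ because $B \setminus \{b\} \subseteq A \subseteq \hull{A}$; the hypothesis $\ell^* \notin \hull{A \cup S}$ needed for the exchange step is automatic, since $\ell^* \in \hull{A \cup S}$ would force $b \in \hull{A \cup S \cup \{\ell^*\}} = \hull{A \cup S}$; and the closure form of anti-exchange follows from the paper's set form by applying it to the convex set $\hull{Z}$. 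What the paper's chain argument buys is brevity and the fact that it works straight from the clause, with no reformulation and no detour through extreme points; what yours buys is a reusable separation lemma, a sharper localization of where anti-exchange enters (one exchange step rather than an unbounded iteration), and the intermediate characterization of $M \models \varphi \cond \psi$ as ``$K \nsubseteq \hull{C \cup B}$ for every convex $C$ with $K \nsubseteq C$'', which is of independent interest.
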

\begin{proof}
 The equivalence of items (\ref{extreme clause})~and~(\ref{closure
clause}) follows from Theorem~\ref{smooth theorem}. Hence, it suffices
to show that items (\ref{standard clause})~and~(\ref{closure clause})
are equivalent.

 Assume that $M \models \varphi \cond \psi$ and consider any $C \in
\C$ such that $\ext{\varphi} \cap \ext{\psi} \subseteq C$. We want to
show that then $\ext{\varphi} \subseteq C$. If this was not the case
then it would follow from $M \models \varphi \cond \psi$ that there
is some $D \in \C$ with $C \subseteq D$ such that $\ext{\varphi}
\nsubseteq D$ and $\ext{\varphi} \subseteq D \cup \ext{\psi}$. These
latter two inclusions entail that $\ext{\varphi} \cap \ext{\psi}
\nsubseteq D$, contradicting $\ext{\varphi} \cap \ext{\psi} \subseteq C
\subseteq D$.

 For the other direction assume that $\ext{\varphi} \subseteq
\hull{\ext{\varphi} \cap \ext{\psi}}$. We derive a contradiction from
the assumption that not $M \models \varphi \cond \psi$. The goal is to
construct an infinite, strictly increasing chain $C_0 \subset C_1
\subset \dots$ of convex sets such that $C_i \nsubseteq \ext{\varphi}$
and $(C_{i + 1} \setminus C_i) \cap \ext{\varphi} \neq \emptyset$ for
all $i \in \N$. This then contradicts the assumption that
$\ext{\varphi}$ is finite.

Because we assume that not $M \models \varphi \cond \psi$ there is some
$C \in \C$ with $C \nsubseteq \ext{\varphi}$ such that for every $D \in
\C$ with $C \subseteq D$ and $D \nsubseteq \ext{\varphi}$ we have that
$\ext{\varphi} \nsubseteq D \cup \ext{\psi}$. Let $C_0 = C$.

To construct $C_{i + 1}$ from $C_i$ assume that we have a $C_i \in \C$
such that $C_i \nsubseteq \ext{\varphi}$. From the assumption that
$\ext{\varphi} \subseteq \hull{\ext{\varphi} \cap \ext{\psi}}$ it
follows that there is some $x \in \ext{\varphi} \cap \ext{\psi}$ such
that $x \notin C_i$. Because $C = C_0 \subseteq C_i$ we obtain from the
choice of $C$ that $\ext{\varphi} \nsubseteq D \cup \ext{\psi}$. Thus,
there is some $y \in \ext{\varphi}$ such that $y \notin C_i$ and $y
\notin \ext{\psi}$. Because $x \in \ext{\psi}$ it follows that $x \neq
y$ and thus we can apply the anti-exchange property to obtain a convex
set $C^+$ with $C_i \subseteq C^+$ that contains precisely one of $x$
and $y$. We set $C_{i + 1} = C^+$. Since both $x$ and $y$ are in
$\ext{\varphi}$, but none of them is in $C_i$, it follows that $C_{i +
1} \nsubseteq \ext{\varphi}$ and $(C_{i + 1} \setminus C_i) \cap
\ext{\varphi} \neq \emptyset$.
\end{proof}

\begin{example}
 The picture in Figure~\ref{intro example} can be taken to show the
model $M = (W,\C,V)$ with
\begin{itemize}
 \item $W = \{x,y,z,u,v\} \subseteq \R^2$ with $x = (0,5)$, $y =
(4,5)$, $z = (2.4,3)$, $u = (1.9,4.3)$, $v = (1.2,1.5)$,
 \item $\C$
is the relative convexity of $W$ in $\R^2$, and
 \item $V(p) = \{x,z,u\}$, $V(q) = \{x,y,u\}$ and $V(r) = \{x,y,z\}$.
\end{itemize}
\end{example}

\begin{example} \label{running example}
 As the running example for our completeness proof we use the following
formula
\begin{equation*}
 \alpha = (\top \cond p) \land (q \cond p) \land (\neg (p
\leftrightarrow q) \cond p) \land \neg (\neg q \cond p) \land \neg ((p
\leftrightarrow q) \cond p) \land \neg (\neg p \cond \neg q).
\end{equation*}
A relatively simple model $M = (W,\C,V)$ in which $\alpha$ is true is as
follows:
\begin{itemize}
 \item $W = \{pq,p\bar{q},\bar{p}q,\bar{p}\bar{q}\}$ is a four
element set,
 \item $\C =
\{\emptyset,\{\bar{p}q\},\{\bar{p}\bar{q}\},\{pq,\bar{p}q\},\{p\bar{q},\bar{p}q\},\{\bar{p}q,\bar{p}\bar{q}\},W
\setminus \{p\bar{q}\},W \setminus \{pq\},W\}$, and
 \item $V(p) = \{pq,p\bar{q}\}$ and $V(q) = \{pq,\bar{p}q\}$.
\end{itemize}
\end{example}

\begin{example} \label{preferential models}
 Every model in the order semantics of the form $M = (W,\leq,V)$, where
$\leq$ is a partial order over $W$, yields a model $M' =
(W,\upsets{\leq},V)$ in the sense defined here. In fact $M$ and $M'$
satisfy the same conditionals. In the finite case this follows from the
reformulation of our semantic clause in Proposition~\ref{simpler clause}
and the observation that the minimal elements of some set in a poset are
precisely its extreme points in the upset convexity. In the infinite
case we leave it to the reader to check that the semantic clause for the
conditional relative to an infinite partial order $\leq$ from
\cite{Burgess81,Veltman85}
\begin{tabbing}
 $M \models \varphi \cond \psi$ \quad iff \quad \= for all $w \in
\ext{\varphi}$ there is a $v \leq w$ with $v \in \ext{\varphi}$ \\
\> such that for all $u \leq v$ if $u \in \ext{\varphi}$ then $u \in
\ext{\psi}$.
\end{tabbing}
is equivalent to the semantic clause given above with respect to the
upset convexity $\upsets{\leq}$. This connection between the order
semantics and the semantics in abstract convex geometries has as a
precursor the connection between the order semantics and premise
semantics that was already observed in
\cite{Lewis81,Benthem11,MartiPinosio14a}.
\end{example}

\section{Completeness for abstract convex geometries}
\label{completeness}

This section contains a completeness result for preferential conditional
logic with respect to the models from section~\ref{abstract semantics}
that are based on abstract convex geometries. It reads at follows:
\begin{maintheorem} \label{abstract completeness}
 Every one-step formula $\varphi \in \Lang_1$ that is consistent in
preferential conditional logic is true in a model of the form
$(W,\C,V)$, where $W$ is a finite set and $\C$ a convex geometry over
$W$.
\end{maintheorem}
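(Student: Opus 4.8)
The plan is to build a finite canonical model directly from a maximal consistent extension of $\varphi$. Since $\varphi \in \Lang_1$ mentions only finitely many propositional letters, let $W$ be the finite set of propositional atoms over these letters, let $V$ send each letter to the set of atoms containing it, and note that every subset of $W$ is of the form $\ext{\psi}$ for some $\psi \in \Lang_0$. Extending $\{\varphi\}$ to a set $\Gamma \subseteq \Lang_1$ that is maximal consistent over this finite signature, it suffices to construct a convex geometry $\C$ on $W$ for which $M = (W,\C,V)$ satisfies $M \models \psi \cond \chi$ exactly when $(\psi \cond \chi) \in \Gamma$, for all $\psi,\chi \in \Lang_0$; since $\models$ commutes with the Boolean connectives and $\varphi \in \Gamma$, this yields $M \models \varphi$. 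Because $W$ is finite, Proposition~\ref{simpler clause} lets me replace the semantic clause for the conditional by the condition $\extreme{\ext{\psi}} \subseteq \ext{\chi}$, so the whole task reduces to producing a convex geometry whose extreme-point operator is dictated by $\Gamma$.

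To read off the intended extreme points, I would define, for $A = \ext{\psi} \subseteq W$, the operator $e(A) = \bigcap \{ \ext{\chi} \mid (\psi \cond \chi) \in \Gamma \}$. This is well defined independently of the chosen $\psi$ by (LLE); it satisfies $e(A) \subseteq A$ by (Id); and by (And) together with finiteness there is a strongest consequent $\chi^*$ with $e(A) = \ext{\chi^*}$ and $(\psi \cond \chi^*) \in \Gamma$. The matching condition then falls out of (RW): $e(\ext{\psi}) \subseteq \ext{\chi}$ holds iff $\chi^* \to \chi$ is propositionally valid iff $(\psi \cond \chi) \in \Gamma$. The remaining conditional axioms are used to give $e$ the combinatorial shape of an extreme-point operator: (CM) yields the heredity (``outcast'') law that $e(B) = e(A)$ whenever $e(A) \subseteq B \subseteq A$, and (Or) yields $e(A \cup B) \subseteq e(A) \cup e(B)$.

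From $e$ I would reconstruct the geometry by setting $\hull{A} = \{ x \in W \mid x \in A \text{ or } x \notin e(A \cup \{x\}) \}$ and taking $\C$ to be the family of fixed points of $\hullSym$. The heredity and union laws for $e$ make $\hullSym$ extensive, monotone and idempotent, so its fixed points are automatically closed under arbitrary intersection, and one checks that the extreme-point operator of this geometry is exactly $e$, whence Proposition~\ref{simpler clause} delivers the desired matching and hence $M \models \varphi$. The main obstacle is the anti-exchange property: unlike closure under intersection, it does not follow from soundness-style reasoning, and it is precisely the point at which convex-geometry semantics is strictly stronger than the neighborhood semantics over arbitrary families of sets. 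I expect to have to extract it from (CM) and (Or) together with the maximality of $\Gamma$, showing that for $x \neq y$ outside a convex set at most one of $x,y$ can be killed as an extreme point by adjoining the other. A cleaner route, should the direct derivation prove awkward, is to manufacture the geometry as the upset convexity $\upsets{\leq}$ of a suitable partial order $\leq$ on $W$ whose minimal-element operator reproduces $e$; by Example~\ref{preferential models} such a model satisfies the same conditionals, and anti-exchange then comes for free from the $T_0$ separation of the associated Alexandroff topology.
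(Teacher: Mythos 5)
Your main route is, at bottom, the paper's own construction seen dually: the paper fixes the same $W$ (all propositional atoms), the same $V$, and the same maximal consistent set, but then defines $\C$ directly as the family of sets $C$ such that $\ext{\alpha} \cap \ext{\beta} \subseteq C$ implies $\ext{\alpha} \subseteq C$ whenever $\alpha \cond \beta$ is in the maximal consistent set, and it discharges the anti-exchange property syntactically (Lemma~\ref{has anti exchange}, via the derived rule (R) and a syntactic hull operator), whereas you encode the maximal consistent set into a choice operator $e$ and rebuild the geometry from $e$. Your preparatory claims are essentially correct, with one repair: the two-sided heredity law $e(A) \subseteq B \subseteq A \Rightarrow e(B) = e(A)$ does not follow from (CM) alone, which only yields $e(B) \subseteq e(A)$; the converse inclusion needs (CCut) from Proposition~\ref{derivable}. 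The genuine gap is exactly where you flag it: anti-exchange is left as an expectation, and it is the crux of the whole theorem --- it is the one place where the convex-geometry semantics is stronger than neighborhood semantics, and the place where the paper does its hardest work. You should know, though, that the gap closes in a few lines from the heredity law you already have, so no appeal to ``maximality'' beyond what you used is needed: let $C$ be a fixed point of your hull operator, $x \neq y$, $x,y \notin C$, and suppose neither $\hull{C \cup \{x\}}$ nor $\hull{C \cup \{y\}}$ separates $x$ from $y$, i.e.\ $y \in \hull{C \cup \{x\}}$ and $x \in \hull{C \cup \{y\}}$. By your definition of the hull this forces $x,y \notin e(C \cup \{x,y\})$, hence $e(C \cup \{x,y\}) \subseteq C$; heredity applied to $e(C \cup \{x,y\}) \subseteq C \cup \{x\} \subseteq C \cup \{x,y\}$ gives $e(C \cup \{x\}) = e(C \cup \{x,y\}) \subseteq C$, so $x \notin e(C \cup \{x\})$, which by the definition of the hull puts $x \in \hull{C} = C$, a contradiction. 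Supplying this argument, together with the routine verifications you defer, turns your proposal into a correct proof, one that is morally the paper's proof with the hull operator replaced by its dual choice operator.

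Your fallback route, by contrast, must be abandoned: it is not merely awkward but impossible in general. You ask for a partial order on $W$ itself whose minimal-element operator reproduces $e$; but $W$ contains each propositional valuation exactly once, and the Remark following the paper's proof of Theorem~\ref{abstract completeness} points out (citing Veltman's coherence condition and the example in Section~5.2 of Kraus--Lehmann--Magidor) that completeness for the order semantics genuinely requires \emph{duplicating} worlds. Concretely, the convexity $\C$ of Example~\ref{running example} on the four atoms is not closed under unions --- $\{pq,\bar{p}q\} \cup \{p\bar{q},\bar{p}q\}$ is not convex there --- so it is not the upset convexity of any order, and the consistent formula $\alpha$ of that example is designed to require such a geometry. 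So anti-exchange ``for free'' from Example~\ref{preferential models} is not available; the direct argument above is the way to finish.
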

This theorem is a consequence of at least two results that already exist
in the literature:
\begin{enumerate}
 \item Theorem~\ref{abstract completeness} can be obtained from the
well-know completeness with respect to the semantics in posets
\cite{Burgess81,Veltman85} together with the observation from
Example~\ref{preferential models} that every model based on a poset
gives rise to a model based on a convex geometry that satisfies the same
formulas. However, it needs to be checked that the necessary formal
proofs go through with our more restrictive one-step proof system and
that the completeness construction yields a finite model with an
anti-symmetric ordering.

 \item An alternative approach is to connect to the nonmonotonic
consequence relations from \cite{Kraus90} and then apply the duality
result from \cite{MartiPinosio19}. Observe that every consistent formula
$\varphi \in \Lang_1$ gives rise to a nonmonotonic consequence relation
$\quickDefaults$ satisfying the axioms of System~P, by taking $\alpha
\quickDefaults \beta$ iff $\vdash \varphi \rightarrow (\alpha \cond
\beta)$. If one then moves to the free Boolean algebra over $\Prop$,
which we can assume to be finite, then one is precisely on the algebraic
side of the dual correspondence from \cite{MartiPinosio19}. On the
spatial side of this duality one then obtains a convex geometry over the
atoms of the free Boolean algebra on $\Prop$.
\end{enumerate}
For readers who are not comfortable with adapting these existing results
we give a direct proof of Theorem~\ref{abstract completeness}.

\newcommand{\chara}[1]{{\chi({#1})}}

To prove Theorem~\ref{abstract completeness} we need to define a finite
model $M = (W,\C,V)$ such that $M \models \varphi$. We first discuss the
definition of the domain $W$ and the valuation $V : \Prop \to \powerset
W$. We let $W$ be the set of all assignments $a : \Prop \to \{0,1\}$ in
the sense of classical propositional logic. This set is finite because
we can assume $\Prop$ to be finite since there are only finitely many
propositional letters occurring in $\varphi$. The valuation $V : \Prop
\to \powerset W$ is defined such that $V(p) = \{a \in W \mid a(p) = 1\}$
for all $p \in \Prop$. By the completeness theorem for classical
propositional logic we have that $\ext{\alpha}_V \subseteq
\ext{\beta}_V$ iff $\vdash \alpha \rightarrow \beta$ for all
$\alpha,\beta \in \Lang_0$. We use this fact in the continuation of this
proof without explicitly mentioning it. We also need that for every set
$Y \subseteq W$ there is a characteristic formula $\chara{Y} \in
\Lang_0$ such that $\ext{\chara{Y}}_V = Y$. Because $\Prop$ and $W$ are
finite we can define $\chara{Y} = \bigvee_{a \in Y} \chara{a}$, where
$\chara{a} = \bigwedge \{p \mid a(p) = 1\} \land \bigwedge \{\neg p \mid
a(p) = 0\}$.

To define the convex geometry $\C$ we first fix a maximally consistent
set $\Sigma \subseteq \Lang_1$ with $\varphi \in \Sigma$. Because
$\varphi$ is consistent such a set exists by Lindenbaum's Lemma. Below
we are implicitly going to make use of the fact that $\Sigma$ is closed
under provable implications, that is, if $\vdash \bigwedge \Sigma'
\rightarrow \rho$ for some finite $\Sigma' \subseteq \Sigma$ then $\rho
\in \Sigma$. We then define the family of convex sets as follows:
\[
 \C = \{C \subseteq W \mid \ext{\alpha} \cap \ext{\beta} \subseteq C
\mbox{ implies } \ext{\alpha} \subseteq C \mbox{ for all } \alpha,\beta
\in \Lang_0 \mbox{ with } \alpha \cond \beta \in \Sigma \}.
\]
Define the model $M = (W,\C,V)$. To finish the proof of
Theorem~\ref{abstract completeness} we need to verify that $\C$ is a
convex geometry and that $M \models \varphi$. It is straight-forward to
check that $\C$ is closed under intersections. Thus it follows from
Lemma~\ref{has anti exchange} below, which states that $\C$ has the
anti-exchange property, that $\C$ is a convex geometry. That $M \models
\varphi$ follows from Lemma~\ref{truth lemma}, which states that $M
\models \theta$ iff $\theta \in \Sigma$ for all $\theta \in \Lang_1$.

\newcommand{\shullSym}{h}
\newcommand{\shull}[1]{{\shullSym({#1})}}

To prove Lemmas \ref{has anti exchange}~and~\ref{truth lemma} we need
the following syntactic characterization of the convex hull operator in
$\C$:
\begin{align*}
 \shullSym : \powerset W & \to \powerset W, \\
 Y & \mapsto \bigcup \{\ext{\delta} \subseteq W \mid \delta \cond
\chara{Y} \in \Sigma\}.
\end{align*}
It is possible to show that $\shullSym$ is the closure operator
associated to the meet semilattice $\C \subseteq \powerset W$. We do not
do this here because the completeness proof only needs the following
weaker properties of $\shullSym$:
\begin{lemma} \label{shull properties}
 For all $Y \subseteq W$ it holds that
\begin{enumerate}
 \item \label{extensive} $Y \subseteq \shull{Y}$, and
 \item \label{convex} $\shull{Y} \in \C$.
\end{enumerate}
\end{lemma}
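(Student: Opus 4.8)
The plan is to establish the two items separately, drawing only on the definition of $\shullSym$, the axioms (Id), (And), (Or), the rule (RW), and the derived principle (CCut') from Proposition~\ref{derivable}. Item~\ref{extensive} is immediate: the characteristic formula satisfies $\ext{\chara{Y}} = Y$, and $\chara{Y} \cond \chara{Y}$ is an instance of (Id), hence lies in $\Sigma$. Thus $\chara{Y}$ is among the formulas $\delta$ in the union defining $\shull{Y}$, so $Y = \ext{\chara{Y}} \subseteq \shull{Y}$.

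Before treating item~\ref{convex} I would record a reformulation that exploits the finiteness of $W$. Since $W$ is finite there are only finitely many distinct sets of the form $\ext{\delta}$, so the union $\shull{Y}$ is already a finite union $\ext{\delta_1} \cup \dots \cup \ext{\delta_k}$ with each $\delta_i \cond \chara{Y} \in \Sigma$. Applying (Or) repeatedly to the disjunction $\delta_Y = \delta_1 \lor \dots \lor \delta_k$, and using that $\Sigma$ is closed under provable implication, gives $\delta_Y \cond \chara{Y} \in \Sigma$, while $\ext{\delta_Y} = \shull{Y}$. Hence $\shull{Y}$ is the extension of a single formula $\delta_Y$ with $\delta_Y \cond \chara{Y} \in \Sigma$.

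For item~\ref{convex} I would verify the defining condition of $\C$ directly. Let $\alpha,\beta \in \Lang_0$ with $\alpha \cond \beta \in \Sigma$ and suppose $\ext{\alpha} \cap \ext{\beta} \subseteq \shull{Y} = \ext{\delta_Y}$; the goal is $\ext{\alpha} \subseteq \shull{Y}$. The inclusion $\ext{\alpha} \cap \ext{\beta} \subseteq \ext{\delta_Y}$ means $\vdash (\alpha \land \beta) \to \delta_Y$. From the instance $\alpha \cond \alpha$ of (Id) together with the hypothesis $\alpha \cond \beta$, axiom (And) yields $\alpha \cond (\alpha \land \beta) \in \Sigma$, and then (RW) applied to $\vdash (\alpha \land \beta) \to \delta_Y$ gives $\alpha \cond \delta_Y \in \Sigma$. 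Chaining $\alpha \cond \delta_Y$ with $\delta_Y \cond \chara{Y}$ through (CCut') produces $\alpha \lor \delta_Y \cond \chara{Y} \in \Sigma$. By the definition of $\shull{Y}$ this forces $\ext{\alpha \lor \delta_Y} \subseteq \shull{Y}$, and in particular $\ext{\alpha} \subseteq \shull{Y}$, as required.

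I expect the main obstacle to be item~\ref{convex}, and specifically identifying the correct sequence of conditional manipulations. A naive attempt to strengthen the antecedent of $\delta_Y \cond \chara{Y}$ down to $\alpha$ would fail, since strengthening the antecedent is exactly the monotonic principle that preferential logic rejects. The working route keeps $\alpha$ fixed as the antecedent throughout: it first sharpens the consequent of $\alpha \cond \beta$ to $\alpha \land \beta$ and then to $\delta_Y$ using (And) and (RW), and only afterwards performs the cut with $\delta_Y \cond \chara{Y}$ via (CCut'), whose disjunctive conclusion $\alpha \lor \delta_Y$ still suffices to capture $\ext{\alpha}$. The finiteness of $W$, used to collapse the union into the single formula $\delta_Y$, is precisely what makes these purely syntactic steps available.
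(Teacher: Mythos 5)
Your proof is correct and follows essentially the same route as the paper's: item~\ref{extensive} via (Id), and item~\ref{convex} by deriving $\alpha \cond \alpha \land \beta$ from (Id) and (And), sharpening the consequent with (RW), merging the covering formulas with (Or), cutting via (CCut'), and concluding from the definition of $\shullSym$. The only cosmetic difference is where finiteness of $W$ enters — you collapse the whole union $\shull{Y}$ into a single formula $\delta_Y$ at the outset, while the paper extracts a finite cover of $\ext{\alpha} \cap \ext{\beta}$ inside the argument — but the syntactic steps are otherwise identical.
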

\begin{proof}
 For item~\ref{extensive} observe that by (Id) we have that $\vdash
\chara{Y} \cond \chara{Y}$. Thus $\vdash \varphi \rightarrow (\chara{Y}
\cond \chara{Y})$ and $\chara{Y} \cond \chara{Y} \in \Sigma$, which
entails $Y = \ext{\chara{Y}} \subseteq \shull{Y}$ by the definition of
$\shullSym$.

 For item~\ref{convex} take any $\alpha,\beta \in \Lang_0$ such that
$\alpha \cond \beta \in \Sigma$ and $\ext{\alpha} \cap \ext{\beta}
\subseteq \shull{Y}$. We need to show that then $\ext{\alpha} \subseteq
\shull{Y}$. Because $W$ is finite it follows from $\ext{\alpha} \cap
\ext{\beta} \subseteq \shull{Y}$ that there are finitely many
$\delta_1,\dots,\delta_n \in \Lang_0$ with $\ext{\alpha} \cap
\ext{\beta} \subseteq \ext{\delta_1} \cup \dots \cup \ext{\delta_n}$ and
$\delta_i \cond \chara{Y} \in \Sigma$ for all $i \in \{1,\dots,n\}$.
From the former we get that $\vdash (\alpha \land \beta) \rightarrow
(\delta_1 \lor \dots \lor \delta_n)$. Using (RW) we obtain $\alpha \cond
(\delta_1 \lor \dots \lor \delta_n) \in \Sigma$ because by (Id) and
(And) we have that $\alpha \cond \alpha \land \beta \in \Sigma$. From
the latter, that $\delta_i \cond \chara{Y} \in \Sigma$ for all $i \in
\{1,\dots,n\}$, it follows with finitely many applications of (Or) that
$\delta_1 \lor \dots \lor \delta_n \cond \chara{Y} \in \Sigma$. Because
of the (CCut') from Proposition~\ref{derivable} we get that $\alpha \lor
\delta_1 \lor \dots \lor \delta_n \cond \chara{Y} \in \Sigma$. By the
definition of $\shullSym$ this entails $\ext{\alpha} \subseteq
\shull{Y}$.
\end{proof}

\begin{lemma} \label{has anti exchange}
 $\C$ has the anti-exchange property.
\end{lemma}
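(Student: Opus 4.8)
The plan is to deduce the anti-exchange property from the corresponding separation property of the convex hull operator $\hullSym$ of $\C$. Fix $C \in \C$ and $x \neq y$ with $x, y \notin C$. Both $\hull{C \cup \{x\}}$ and $\hull{C \cup \{y\}}$ are convex sets containing $C$, the first also containing $x$ and the second also containing $y$. If $y \notin \hull{C \cup \{x\}}$, then $D = \hull{C \cup \{x\}}$ separates $x$ from $y$ as required; symmetrically, if $x \notin \hull{C \cup \{y\}}$, then $D = \hull{C \cup \{y\}}$ works. Hence it suffices to rule out the remaining case in which $y \in \hull{C \cup \{x\}}$ and $x \in \hull{C \cup \{y\}}$ hold simultaneously, and from this case I would derive a contradiction.

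To translate the two hull memberships into syntax I use that $\hull{Z} \subseteq \shull{Z}$ for every $Z \subseteq W$, which holds because $\shull{Z} \in \C$ contains $Z$ by Lemma~\ref{shull properties}. Writing $d_x = \chara{C \cup \{x\}}$ and $d_y = \chara{C \cup \{y\}}$, so that $\ext{d_x} = C \cup \{x\}$ and $\ext{d_y} = C \cup \{y\}$, the assumption $y \in \hull{C \cup \{x\}}$ gives, by the definition of $\shullSym$, a formula $\delta$ with $y \in \ext{\delta}$ and $\delta \cond d_x \in \Sigma$; symmetrically there is a $\delta'$ with $x \in \ext{\delta'}$ and $\delta' \cond d_y \in \Sigma$. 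After replacing $\delta$ by $\delta \lor d_x$, which still conditions on $d_x$ by (Id), (RW) and (Or), I may assume that $\ext{\delta}$ contains all of $C \cup \{x, y\}$, and likewise for $\delta'$.

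The core of the argument is then a purely syntactic derivation. Let $g = \chara{C \cup \{x, y\}}$, so that $\ext{g} = C \cup \{x, y\} \subseteq \ext{\delta}$ and $\vdash d_x \to g$. From $\delta \cond d_x$ I obtain $\delta \cond g$ by (RW), and then (CM) yields $\delta \land g \cond d_x$; since $\delta \land g$ is equivalent to $g$, (LLE) gives $g \cond d_x \in \Sigma$. The symmetric argument applied to $\delta'$ gives $g \cond d_y \in \Sigma$. Now (And) produces $g \cond d_x \land d_y \in \Sigma$, and because $x \neq y$ the formula $d_x \land d_y$ is equivalent to $\chara{C}$, so (RW) gives $g \cond \chara{C} \in \Sigma$. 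This is the desired contradiction: since $\ext{g} \cap \ext{\chara{C}} = C \subseteq C$ while $x \in \ext{g} \setminus C$, the set $C$ fails the defining condition for membership in $\C$, contradicting $C \in \C$.

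I expect the main obstacle to be finding this final derivation, and in particular the insight that one should not attempt to prove $g \cond \chara{C}$ directly, but instead remove $x$ and $y$ from the consequent separately, obtaining $g \cond d_x$ and $g \cond d_y$ and only then combining them via (And). The essential move is the use of cautious monotonicity (CM) to sharpen the antecedent of $\delta \cond d_x$ down to $g$; this is exactly what encodes on the syntactic side the semantic fact that the redundant point $y$ lies in the convex hull of the remaining points, and is therefore not an extreme point that the conditional must respect.
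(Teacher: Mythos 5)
Your proof is correct and follows essentially the same route as the paper's: you extract a witness formula $\delta$ with $y \in \ext{\delta}$ and $\delta \cond \chara{C \cup \{x\}} \in \Sigma$ from the hull membership, syntactically derive $\chara{C \cup \{x,y\}} \cond \chara{C \cup \{x\}} \in \Sigma$ and its symmetric counterpart, combine them with (And) and (RW) into $\chara{C \cup \{x,y\}} \cond \chara{C} \in \Sigma$, and contradict $C \in \C$. The only cosmetic differences are your explicit three-way case split (the paper's contradiction hypothesis immediately yields both hull memberships) and that you inline the (Id), (Or), (RW), (CM) steps that the paper packages as the derived rule (R).
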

\begin{proof}
Consider any $C \in \C$ and $x \neq y$ with $x,y \notin C$. We derive a
contradiction from the assumption that for all $D \in \C$
with $C \subseteq D$ we have $x \in D$ iff $y \in D$.

If this assumption was true then it follows that $y \in \shull{\chara{C
\cup \{x\}}}$ because $x \in \shull{\chara{C \cup \{x\}}}$, $C \subseteq
\shull{\chara{C \cup \{x\}}}$ and $\shull{\chara{C \cup \{x\}}} \in \C$.
Thus there is some $\delta_y \in \Lang_0$ such that $y \in
\ext{\delta_x}$ and $\delta_y \cond \chara{C \cup \{x\}} \in \Sigma$.
Because $\vdash \chara{C \cup \{x\}} \rightarrow \chara{C \cup \{x,y\}}$
it follows from the derived rule (R) in Proposition~\ref{derivable} that
$(\delta_y \land \chara{C \cup \{x,y\}}) \lor \chara{C \cup \{x\}} \cond
\chara{C \cup \{x\}} \in \Sigma$. One can check that $(\ext{\delta_y}
\cap (C \cup \{x,y\})) \cup (C \cup \{x\}) = C \cup \{x,y\}$. Thus it
follows with (LLE) that $\chara{C \cup \{x,y\}} \cond \chara{C \cup
\{x\}} \in \Sigma$.

If we interchange the roles of $x$ and $y$ in the reasoning from the
previous paragraph we obtain that also $\chara{C \cup \{x,y\}} \cond
\chara{C \cup \{y\}} \in \Sigma$. Thus with the help of (And) we can
deduce $\chara{C \cup \{x,y\}} \cond (\chara{C \cup \{x\}} \land
\chara{C \cup \{y\}}) \in \Sigma$ from which we get $\chara{C \cup
\{x,y\}} \cond \chara{C} \in \Sigma$ by (RW). This contradicts $C \in
\C$ because $\ext{\chara{C \cup \{x,y\}}} \cap \ext{\chara{C}} \subseteq
C$ but $\ext{\chara{C \cup \{x,y\}}} \nsubseteq C$.
\end{proof}

\begin{lemma} \label{truth lemma}
 For all $\theta \in \Lang_1$ it holds that
\[
 M \models \theta \iff \theta \in \Sigma.
\]
\end{lemma}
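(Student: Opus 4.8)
The plan is to prove the biconditional by induction on the structure of $\theta \in \Lang_1$. Every formula of $\Lang_1$ is a Boolean combination of conditionals $\alpha \cond \beta$ with $\alpha,\beta \in \Lang_0$, so the only inductive steps are the connectives $\neg$ and $\land$, and these are routine: since $\Sigma$ is maximally consistent, $\neg\theta \in \Sigma$ holds exactly when $\theta \notin \Sigma$, and $\theta_1 \land \theta_2 \in \Sigma$ holds exactly when both conjuncts lie in $\Sigma$. These match the semantic clauses for $\neg$ and $\land$ given for $\Lang_1$, so the induction hypothesis carries through. All the real work therefore sits in the base case of a conditional $\alpha \cond \beta$, and I would first reduce it using Proposition~\ref{simpler clause}: because $W$ is finite, so is $\ext{\alpha}$, and hence $M \models \alpha \cond \beta$ is equivalent to the closure condition $\ext{\alpha} \subseteq \hull{\ext{\alpha} \cap \ext{\beta}}$. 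It then remains to show this inclusion holds precisely when $\alpha \cond \beta \in \Sigma$.

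For the direction from membership to truth, suppose $\alpha \cond \beta \in \Sigma$ and set $C = \hull{\ext{\alpha} \cap \ext{\beta}}$. This $C$ is convex and contains $\ext{\alpha} \cap \ext{\beta}$, so the defining condition of $\C$ applied to the pair $\alpha,\beta$ immediately yields $\ext{\alpha} \subseteq C$, which is the required closure condition. This direction uses nothing beyond the definition of $\C$.

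For the converse, suppose $\ext{\alpha} \subseteq \hull{\ext{\alpha} \cap \ext{\beta}}$ and write $Y = \ext{\alpha} \cap \ext{\beta}$. By Lemma~\ref{shull properties} we have $Y \subseteq \shull{Y}$ and $\shull{Y} \in \C$, so $\shull{Y}$ is a convex set containing $Y$ and therefore $\hull{Y} \subseteq \shull{Y}$; combining this with the hypothesis gives $\ext{\alpha} \subseteq \shull{Y}$. By finiteness of $W$ and the definition of $\shullSym$, there are finitely many $\delta_1,\dots,\delta_n$ with $\delta_i \cond \chara{Y} \in \Sigma$ and $\ext{\alpha} \subseteq \ext{\delta_1} \cup \dots \cup \ext{\delta_n}$; setting $\delta = \delta_1 \lor \dots \lor \delta_n$ we obtain $\vdash \alpha \rightarrow \delta$ and, by repeated applications of (Or), $\delta \cond \chara{Y} \in \Sigma$. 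The final task is to strengthen the antecedent from $\delta$ to $\alpha$: since $\ext{\chara{Y}} = Y \subseteq \ext{\alpha} \subseteq \ext{\delta}$ we have $\vdash \chara{Y} \rightarrow \alpha$ and $\vdash \alpha \rightarrow \delta$, so (RW) gives $\delta \cond \alpha \in \Sigma$, then (CM) applied to $\delta \cond \chara{Y}$ and $\delta \cond \alpha$ gives $\delta \land \alpha \cond \chara{Y} \in \Sigma$, and (LLE) with $\vdash (\delta \land \alpha) \leftrightarrow \alpha$ yields $\alpha \cond \chara{Y} \in \Sigma$; a concluding (RW) using $\vdash \chara{Y} \rightarrow \beta$ delivers $\alpha \cond \beta \in \Sigma$.

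I expect this last syntactic manipulation to be the main obstacle: passing from the disjunctive antecedent $\delta$ supplied by $\shullSym$ to the intended antecedent $\alpha$ while keeping the consequent fixed is exactly where the individual axioms and rules must be orchestrated correctly, and it is the one place where the specific strength of the System~P axioms (rather than mere closure properties of $\Sigma$) is genuinely used.
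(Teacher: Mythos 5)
Your proposal is correct, and it follows the paper's proof almost step for step: the same induction with routine Boolean cases, the same reduction of the base case via Proposition~\ref{simpler clause}, the same one-line argument from the definition of $\C$ for the membership-to-truth direction, and the same use of the syntactic hull operator, Lemma~\ref{shull properties}, finiteness, and (Or) to produce the disjunction $\delta = \delta_1 \lor \dots \lor \delta_n$ with $\delta \cond \chi(Y) \in \Sigma$. The one place where you genuinely diverge is the concluding derivation that transfers the conditional from the antecedent $\delta$ to the antecedent $\alpha$. The paper first derives $\alpha \cond \delta$ from (Id) and (RW), then applies the derived rule (CCut$'$) to get $\alpha \lor \delta \cond \alpha \land \beta$, and finishes with (WCM) and (LLE). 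You instead exploit the propositional entailment $\vdash \chi(Y) \rightarrow \alpha$, which holds because $Y = \ext{\alpha} \cap \ext{\beta} \subseteq \ext{\alpha}$ and which the paper's argument never uses: this gives $\delta \cond \alpha$ by (RW), then the primitive axiom (CM) moves $\alpha$ into the antecedent, and (LLE) plus a final (RW) finish the job. Your route is slightly more economical in that it bypasses the derived rules (CCut$'$) and (WCM) of Proposition~\ref{derivable} entirely, working only with primitive axioms and rules; the paper's route instead reuses the derived-rule toolkit it has already set up (the same (CCut$'$) also drives the proof of Lemma~\ref{shull properties}), at the price of not noticing that the consequent $\chi(Y)$ entails the antecedent $\alpha$. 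Both derivations are sound, so this is a matter of style rather than substance, but your observation that (CM) suffices here is a nice simplification.
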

\begin{proof}
The proof of this lemma is an induction on the complexity of $\Lang_1$.
The cases for the Boolean operators are straightforward. Thus we only
treat the base case where $\theta = \alpha \cond \beta$.

For the right-to-left direction assume that $\alpha \cond \beta \in
\Sigma$. To prove $M \models \alpha \cond \beta$ we show
that $\ext{\alpha} \subseteq \hull{\ext{\alpha} \cap \ext{\beta}}$,
where $\hullSym$ denotes the convex hull operator of the convex geometry
$\C$. Thus we need to show that $\ext{\alpha} \subseteq C$ for every
convex set $C \in \C$ with $\ext{\alpha} \cap \ext{\beta} \subseteq C$.
This follows directly from the definition of $\C$.

For the other direction assume that $M \models \alpha \cond \beta$. This
means that $\ext{\alpha} \subseteq \hull{\ext{\alpha} \cap
\ext{\beta}}$. Because by Lemma~\ref{shull properties}
$\shull{\ext{\alpha} \cap \ext{\beta}}$ is a convex set containing
$\ext{\alpha} \cap \ext{\beta}$ it follows that $\hull{\ext{\alpha} \cap
\ext{\beta}} \subseteq \shull{\ext{\alpha} \cap \ext{\beta}}$. Thus
$\ext{\alpha} \subseteq \shull{\ext{\alpha} \cap \ext{\beta}}$. Because
$W$ is finite it follows from the definition of $\shullSym$ that there
are $\delta_1,\dots,\delta_n$ such that $\ext{\alpha} \subseteq
\ext{\delta_1} \cup \dots \cup \ext{\delta_n}$ and $\delta_i \cond
\alpha \land \beta \in \Sigma$ for all $i \in \{1,\dots,n\}$. It follows
from the former with the help of (Id) and (RW) that $\vdash \alpha \cond
\delta_1 \lor \dots \lor \delta_n$. Using (Or) we get that $\delta_1 \lor \dots \lor \delta_n \cond \alpha \land \beta
\in \Sigma$ because $\delta_i \cond
\alpha \land \beta \in \Sigma$ for all $i \in \{1,\dots,n\}$. With the help of (CCut'), which is derivable according to
Proposition~\ref{derivable}, it follows that $\alpha \lor \delta_1 \lor
\dots \lor \delta_n \cond \alpha \land \beta \in \Sigma$. Because of
(WCM) from Proposition~\ref{derivable} we obtain that $(\alpha \lor
\delta_1 \lor \dots \lor \delta_n) \land \alpha \cond \beta \in \Sigma$
and by (LLE) we get that $\alpha \cond \beta \in \Sigma$.
\end{proof}

\begin{remark}
 Note that no two distinct worlds in the model that is constructed in
the proof of Theorem~\ref{abstract completeness} satisfy the same
propositional letters. This is in stark contrast to the completeness
proofs of preferential conditional logic with respect to its semantics
in orders from \cite{Burgess81} and \cite{Veltman85}. Part of the
complexity of the constructions in these proofs comes from the fact that
they duplicate possible worlds to obtain enough witnesses in the
constructed order. It follows from the discussion of the coherence
condition in Section~II.4.1 of \cite{Veltman85} or from the example in
the last paragraph of Section~5.2 in \cite{Kraus90} that such a
duplication of worlds is necessary to obtain completeness with respect
to the order semantics. That such a duplication of worlds is not needed
for completeness with respect to convex geometries is exploited in the
duality result from \cite{MartiPinosio19}, which uses convex geometries
on the spatial side of the duality.
\end{remark}

\section{Morphisms of convex geometries}
\label{morphisms}

In this section we recall the notion of a morphism between convex
geometries from \cite{MartiPinosio19}. The motivation for this notion is
that in the finite case they are precisely the functions that preserve
and reflect the truth of all conditionals. It should be mentioned that
our notion of morphism can not be straight-forwardly adapted to the
infinite case as its adequacy relies on the reformulation of the
semantics from Proposition~\ref{simpler clause}, which only holds in the
finite case.

The definition of a morphism uses the following existential and
universal image maps: For every $f : W \to U$ we write $\exiimage{f} :
\powerset W \to \powerset U$ for the left adjoint and $\uniimage{f} :
\powerset W \to \powerset U$ for the right adjoint of the inverse image
map $f^{-1} : \powerset U \to \powerset W, X \mapsto \{w \in W \mid f(w)
\in X\}$. Concretely, this means that for all $Y \subseteq W$
\begin{align*}
 \exiimage{f}(Y) & = \{u \in U \mid f^{-1}(\{u\}) \cap Y \neq
\emptyset\}, \mbox{ and} \\
 \uniimage{f}(Y) & = \{u \in U \mid f^{-1}(\{u\}) \subseteq Y\}.
\end{align*}
It is easy to check that $\comp{\exiimage{f}(Y)} = \uniimage{f}(\comp{Y})$
for all $Y \subseteq W$. Note that $\exiimage{f}$ is just the usual
direct image map.

A \emph{morphism} $f$ from a convex geometry $(W,\C)$ to a convex
geometry $(U,\D)$ is a function $f : W \to U$ such that $\uniimage{f}(C)
\in \D$ for all $C \in \C$. The morphism $f$ is a \emph{strong morphism}
if it additionally satisfies that for every $D \in \D$ there is some $C
\in \C$ such that $D = \uniimage{f}(C)$. Thus, strong morphism are
precisely the functions for which $\D = \{\uniimage{f}(C) \subseteq U
\mid C \in \C\}$. By dualizing and exploiting $\comp{\exiimage{f}(Y)} =
\uniimage{f}(\comp{Y})$ one can adapt this definition of morphism to the
feasible sets of a convex geometry. A morphism is then a function $f$
such that $\exiimage{f}(F)$ is feasible for every feasible $F$, and it
is strong if every feasible set arises as $\exiimage{f}(F)$ for some
feasible $F$.

The reader can convince themselves that surjective affine transformation
on the plane, such as translations, rotations or scalings, are strong
morphisms.

For posets we have that $f : W \to U$ is a morphism between the upset
convexities of partial orders $\leq$ on $W$ and $\leq'$ on $U$ if and
only if it satisfies the following condition, which is just the back
condition on bounded morphism in modal logic:
\begin{itemize}
 \item For all $w \in W$ and $u' \leq' f(w)$ there is a $u \leq w$ such
that $f(u) = u'$.
\end{itemize}
The morphism $f$ is strong if and only if it additionally satisfies the
following condition:\footnote{In \cite{MartiPinosio19} we made the false
claim that the strong morphism between posets are the order preserving
and surjective functions.}
\begin{itemize}
 \item For all $u \in U$ there is a $w \in W$ such that $f(w) = u$ and
for all $w' \leq w$ we have $f(w') \leq' u$.
\end{itemize}
Note that these two conditions on the graph of $f$ correspond to the
conditions on bisimulations between models based on posets from
\cite{Zhu06}.

A further example of a morphism comes from the following proposition. It
shows that removing impossible worlds from a model yields a submodel
that embeds with a strong morphism. As a consequence impossible worlds
can be removed without altering the truth of one-step formulas.
\begin{proposition} \label{getting rid of impossible worlds}
 Let $(W,\C)$ be any convex geometry and let $I = \bigcap \C$. Define $U
= W \setminus I$ and let $\D$ be the relative convexity of $U$ in $W$.
Then $\emptyset \in \D$ and the embedding $e : U \to W,u \mapsto u$ is a
strong morphism from $(U,\D)$ to $(W,\C)$.
\end{proposition}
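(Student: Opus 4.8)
The plan is to unwind the definitions; the whole statement reduces to one computation of the universal image map along the inclusion $e$, together with the observation that $I = \bigcap \C$ is contained in every $C \in \C$. First I would record the two structural facts that require almost no work. Since the relative convexity of $U$ in $W$ is a convex geometry, as already noted, $(U,\D)$ is a convex geometry with $\D = \{C \cap U \mid C \in \C\}$. For $\emptyset \in \D$, note that $\C$ is closed under intersections, so $I \in \C$, and hence $I \cap U = I \cap (W \setminus I) = \emptyset$ witnesses $\emptyset \in \D$.

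The key step is to compute that $\uniimage{e}(Y) = Y \cup I$ for every $Y \subseteq U$. Since $e$ is the inclusion, its fibres are $e^{-1}(\{w\}) = \{w\}$ for $w \in U$ and $e^{-1}(\{w\}) = \emptyset$ for $w \in I$. By the definition of the universal image map a point $w \in U$ lies in $\uniimage{e}(Y)$ exactly when $w \in Y$, while every $w \in I$ lies in $\uniimage{e}(Y)$ vacuously, because the empty fibre is contained in $Y$.

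Both morphism conditions then follow immediately, using that $I \subseteq C$ for every $C \in \C$. For the morphism property, take $D \in \D$ and write $D = C \cap U$ with $C \in \C$; then $\uniimage{e}(D) = D \cup I = (C \cap U) \cup I = C \cup I = C \in \C$. For strongness, given $C \in \C$ set $D = C \cap U \in \D$, and the same computation gives $\uniimage{e}(D) = C$, so every convex set of $W$ is realised as $\uniimage{e}(D)$ for some $D \in \D$. There is no genuine obstacle here; the only thing to get right is the universal image of the inclusion — in particular that $\uniimage{e}$ restores the impossible worlds because their fibres are empty — and the elementary fact that the intersection of all convex sets sits inside each of them.
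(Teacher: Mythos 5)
Your proof is correct. The treatment of $\emptyset \in \D$ is the same as the paper's; for the strong-morphism part you and the paper take mirror-image routes. You argue in the primal formulation: you compute the universal image along the inclusion explicitly, $\uniimage{e}(Y) = Y \cup I$ for $Y \subseteq U$ (worlds in $I$ enter vacuously, their fibres under $e$ being empty), after which the morphism condition and strongness both collapse to the single identity $(C \cap U) \cup I = C$, which holds because $I = \bigcap \C \subseteq C$ for every $C \in \C$. The paper instead invokes the dualized characterization it states just before the proposition: a morphism pushes feasible sets to feasible sets under $\exiimage{e}$, and is strong when every feasible set of the codomain is of the form $\exiimage{e}(F)$. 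In that picture nothing needs computing: since no world of $I$ lies in any feasible set of $(W,\C)$, the feasible sets of $(W,\C)$ and those of $(U,\D)$ are literally the same subsets of $U$, and $\exiimage{e}$ is the identity on them. The two arguments are exchanged by the duality $\comp{\exiimage{e}(Y)} = \uniimage{e}(\comp{Y})$ noted in Section~\ref{morphisms}. Yours buys a self-contained verification of the definition exactly as stated, with the instructive observation that $\uniimage{e}$ reinstates the impossible worlds; the paper's buys brevity, at the cost of routing through the feasible-set reformulation of (strong) morphisms.
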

\begin{proof}
 That $\emptyset \in \D$ follows because, by the closure of $\C$ under
arbitrary intersection we have that $I \in \C$, and thus $\emptyset = I
\cap U \in \D$ by the definition of the relative convexity. To see that
$e$ is a strong morphism it is easier to reason with the feasible sets.
The worlds in $I$ do not appear in any feasible set from $(W,\C)$ and
thus it is clear that the feasible sets in $(W,\C)$ are precisely the
direct images of feasible sets from $(U,\D)$.
\end{proof}

We can lift the notion of a morphism to models in the standard way.
That is, $f : W \to W'$ is a morphism from $M = (W,\C,V)$ to $M' =
(W',\C',V')$ if $f$ is a morphism from $(W,\C)$ to $(W',\C')$ and $V(p)
= f^{-1}(V'(p))$ for all $p \in \Prop$. We call $f$ from $M$ to $M'$
strong if it is strong as a morphism between the underlying convex
geometries.

Propositions 10~and~12 from \cite{MartiPinosio19} entail that in the
finite case strong morphisms preserve and reflect the truth of
conditionals. Because this result is central for our approach we restate
the result in our terminology and provide a self contained proof.
\begin{theorem} \label{adequacy}
 Let $f$ be a strong morphism from a finite model $M = (W,\C,V)$ to a
finite model $M' = (W',C',V')$ then it holds for all $\varphi \in
\Lang_1$ that
\[
 M \models \varphi \iff M' \models \varphi.
\]
\end{theorem}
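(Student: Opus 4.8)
The plan is to argue by induction on the structure of $\varphi \in \Lang_1$. The clauses for the Boolean connectives are immediate from the induction hypothesis, so the only real work lies in the base case $\varphi = \alpha \cond \beta$ with $\alpha,\beta \in \Lang_0$. Before treating that case I would record a preliminary fact about how $f$ interacts with the valuations: since $V(p) = f^{-1}(V'(p))$ for every $p \in \Prop$ and the inverse image map $f^{-1}$ commutes with complement and intersection, a routine induction on propositional formulas gives $\ext{\alpha}_V = f^{-1}(\ext{\alpha}_{V'})$ for all $\alpha \in \Lang_0$.

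For the base case I would exploit that both $W$ and $W'$ are finite, so that the extensions are finite and Proposition~\ref{simpler clause} applies in both models. Writing $A' = \ext{\alpha}_{V'}$, $B' = \ext{\beta}_{V'}$ and $A = \ext{\alpha}_V = f^{-1}(A')$, $B = \ext{\beta}_V = f^{-1}(B')$, the closure reformulation (item~\ref{closure clause}) reduces the claim $M \models \alpha \cond \beta \iff M' \models \alpha \cond \beta$ to the equivalence
\[
 A \subseteq \hull{A \cap B} \iff A' \subseteq \hull{A' \cap B'},
\]
where the convex hull on the left is taken in $\C$ and the one on the right in $\C'$, and where $A \cap B = f^{-1}(A' \cap B')$.

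The key tool is the adjunction $f^{-1} \dashv \uniimage{f}$, that is, $f^{-1}(X) \subseteq Y$ iff $X \subseteq \uniimage{f}(Y)$, together with the observation that any convex set containing a subset also contains its convex hull. For the implication from $M'$ to $M$ I would take a convex $C \in \C$ with $A \cap B \subseteq C$ and show $A \subseteq C$: from $f^{-1}(A' \cap B') \subseteq C$ the adjunction yields $A' \cap B' \subseteq \uniimage{f}(C)$, and since $f$ is a morphism $\uniimage{f}(C) \in \C'$, so the hypothesis $A' \subseteq \hull{A' \cap B'}$ gives $A' \subseteq \uniimage{f}(C)$, whence $A = f^{-1}(A') \subseteq C$ by the adjunction again. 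For the converse, from $M$ to $M'$, I would take a convex $D \in \C'$ with $A' \cap B' \subseteq D$; here strongness supplies a $C \in \C$ with $D = \uniimage{f}(C)$, and the same adjunction manipulations, now using $A \subseteq \hull{A \cap B}$, yield $A' \subseteq D$. Thus one direction uses only that $f$ is a morphism while the other uses strongness.

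I expect the only delicate point to be bookkeeping: keeping the two convex hull operators apart and applying the adjunction in the correct direction. The conceptual load is light precisely because Proposition~\ref{simpler clause} has already converted the quantifier-heavy semantic clause for the conditional into a single inclusion between a set and a convex hull; this is exactly where finiteness of the models is essential, and it is the reason the argument does not extend to the infinite case.
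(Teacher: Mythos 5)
Your proof is correct and follows essentially the same route as the paper's: the same preliminary observation that $\ext{\alpha}_V = f^{-1}(\ext{\alpha}_{V'})$ for $\alpha \in \Lang_0$, the same reduction via Proposition~\ref{simpler clause} to an inclusion into a convex hull, and the same use of the adjunction between $f^{-1}$ and $\uniimage{f}$, with the morphism property carrying one direction and strongness the other. The only cosmetic difference is that the paper argues by contraposition on points lying outside the hull, whereas you quantify directly over the convex sets containing $\ext{\alpha} \cap \ext{\beta}$; these are interchangeable since the hull is exactly the intersection of those convex sets.
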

\begin{proof}
 First observe that because taking preimages is a Boolean homomorphism
between powerset algebras it is clear that the condition that $V(p) =
f^{-1}(V'(p))$ for all $p \in \Prop$ entails that $\ext{\varphi}_V =
f^{-1}(\ext{\varphi}_{V'})$ for all $\varphi \in \Lang_0$.

To prove the preservation of true formulas in $\Lang_1$ one uses a
standard induction on the complexity of formulas. We only consider the
case for the conditional. Because we are in a finite setting we can use
the equivalent formulation of the semantics from
Proposition~\ref{simpler clause}, stating that $\varphi \cond \psi$ is
true in a model iff $\ext{\varphi} \subseteq \hull{\ext{\varphi} \cap
\ext{\psi}}$.

Assume first that $\ext{\varphi}_{V'} \subseteq \hull{\ext{\varphi}_{V'}
\cap \ext{\psi}_{V'}}$ holds in $\C'$. To show that then
$f^{-1}(\ext{\varphi}_{V'}) \subseteq \hull{f^{-1}(\ext{\varphi}_{V'})
\cap f^{-1}(\ext{\psi}_{V'})}$ holds in $\C$ consider any $w \notin
\hull{f^{-1}(\ext{\varphi}_{V'}) \cap f^{-1}(\ext{\psi}_{V'})}$. This
means that there is some convex $C \in \C$ such that
$f^{-1}(\ext{\varphi}_{V'} \cap \ext{\psi}_{V'}) =
f^{-1}(\ext{\varphi}_{V'}) \cap f^{-1}(\ext{\psi}_{V'}) \subseteq C$ and
$w \notin C$. Because $f$ is a morphism it then follows that
$\uniimage{f}(C) \in \C'$ and because $\uniimage{f}$ is right adjoint to
$f^{-1}$ we get $\ext{\varphi}_{V'} \cap \ext{\psi}_{V'} \subseteq
\uniimage{f}(C)$. Thus, $\hull{\ext{\varphi}_{V'} \cap \ext{\psi}_{V'}}
\subseteq \uniimage{f}(C)$ and with the assumption that
$\ext{\varphi}_{V'} \subseteq \hull{\ext{\varphi}_{V'} \cap
\ext{\psi}_{V'}}$ it follows that $\ext{\varphi}_{V'} \subseteq
\uniimage{f}(C)$. From $w \notin C$ we have that $f(w) \notin
\uniimage{f}(C)$ and so $f(w) \notin \ext{\varphi}_{V'}$, which means
that $w \notin f^{-1}(\ext{\varphi}_{V'})$.

For the other direction assume $f^{-1}(\ext{\varphi}_{V'}) \subseteq
\hull{f^{-1}(\ext{\varphi}_{V'}) \cap f^{-1}(\ext{\psi}_{V'})}$. We show
$\ext{\varphi}_{V'} \subseteq \hull{\ext{\varphi}_{V'} \cap
\ext{\psi}_{V'}}$ by contraposition. Thus consider any $w' \notin
\hull{\ext{\varphi}_{V'} \cap \ext{\psi}_{V'}}$. There then is some
convex $C' \in \C'$ such that $\ext{\varphi}_{V'} \cap \ext{\psi}_{V'}
\subseteq C'$ and $w' \notin C'$. Because $f$ is a strong morphism there
exists a $C \in \C$ such that $C' = \uniimage{f}(C)$. Because
$\uniimage{f}$ is right adjoint to $f^{-1}$ we obtain
$f^{-1}(\ext{\varphi}_{V'} \cap \ext{\psi}_{V'}) \subseteq C$ from
$\ext{\varphi}_{V'} \cap \ext{\psi}_{V'} \subseteq C' =
\uniimage{f}(C)$. With $f^{-1}(\ext{\varphi}_{V'} \cap \ext{\psi}_{V'})
= f^{-1}(\ext{\varphi}_{V'}) \cap f^{-1}(\ext{\psi})_{V'}$ it follows
that $\hull{f^{-1}(\ext{\varphi}_{V'}) \cap f^{-1}(\ext{\psi}_{V'})}
\subseteq C$. Combining with the assumption $f^{-1}(\ext{\varphi}_{V'})
\subseteq \hull{f^{-1}(\ext{\varphi}_{V'}) \cap
f^{-1}(\ext{\psi}_{V'})}$ yields $f^{-1}(\ext{\varphi}_{V'}) \subseteq
C$. Since $w' \notin \uniimage{f}(C)$ it must be the case that
$f^{-1}(\{w'\}) \nsubseteq C$, and hence $w \notin \ext{\varphi}_{V'}$.
\end{proof}

\section{Representation of convex geometries in the plane}
\label{representation}

In this section we show that the representation from Theorem~5 in
\cite{Richter17} gives rise to a strong morphism of convex geometries.
It would be possible to show that any such representation of a finite
convex geometry with polygons that have disjoint extreme points yields a
strong morphism. Thus, we could just use Theorem~5 from \cite{Richter17}
as a black box, without disassembling the inner workings of the
construction in its proof. But because this construction is at the heart
of our completeness result, we give a detailed exposition of the
representation in this section. Figure~\ref{pretty representation}
contains an example of this representation for the convex geometry from
Example~\ref{running example}.

\subsection{Decomposition of finite convex geometries}

It is shown in \cite{Edelman85} that every finite convex geometry can
be decomposed into a family of convexities arising from linear orders.
Using these decompositions is crucial for the results in
\cite{Richter17}.

The relevant notion of decomposition is the join in the semi-lattice of
all convex geometries over some fixed finite set $W$, ordered by the
inclusion between sets of sets. From Theorem~2.2 in \cite{Edelman80} it
follows that the join $\C \vee \D$ of convex geometries $\C$ and $\D$
over $W$ can be defined concretely as
\[
 \C \vee \D = \{C \cap D \mid C \in \C \mbox{ and } D \in \D\}.
\]

Recall that a partial order $\leq$ on $W$ is \emph{linear} if $x \leq y$
or $y \leq x$ holds for all $x,y \in W$. The decomposition result, which
is Theorem~5.2 in \cite{Edelman85}, can be formulated in our notation as
follows:
\begin{theorem} \label{decomposition theorem}
 Let $\C$ be a convex geometry over a finite set $W$ such that
$\emptyset \in \C$. Then there is a finite family of linear orders
$(\leq_j)_{j = 1}^m$ such that
\begin{equation} \label{decomposition}
 \C = \bigvee_{j = 1}^m \upsets{\leq_j}.
\end{equation}
\end{theorem}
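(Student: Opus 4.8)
The plan is to realize each linear order in the decomposition as a \emph{saturated chain} of convex sets, using only the closure of $\C$ under intersection together with the anti-exchange property. First observe that for a linear order $x_n <_j \dots <_j x_1$ on $W$ the upward closed sets are exactly the $n+1$ nested sets $\emptyset \subset \{x_1\} \subset \{x_1,x_2\} \subset \dots \subset W$, so $\upsets{\leq_j}$ is nothing but a maximal chain in $(\powerset W,\subseteq)$ from $\emptyset$ to $W$ that grows by exactly one element at each step; conversely every such chain determines a linear order whose upset convexity is precisely that chain. Hence it suffices to produce a finite family of chains of convex sets, each running from $\emptyset$ to $W$ and adding one point per step, such that every $C \in \C$ lies on at least one of them. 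For each $C \in \C$ I will pick one such chain $\Gamma_C$ passing through $C$ and let $(\leq_j)_{j=1}^m$ enumerate the associated linear orders; since $\C$ is finite this family is finite.

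Granting these chains, both inclusions of~\eqref{decomposition} are immediate. For $\bigvee_j \upsets{\leq_j} \subseteq \C$, any member of the join is an intersection $\bigcap_j C_j$ with each $C_j$ lying on the chain $\Gamma_j$, hence $C_j \in \C$; as $\C$ is closed under intersection the whole intersection is convex. For $\C \subseteq \bigvee_j \upsets{\leq_j}$, fix $C \in \C$. By construction $C$ appears on its own chain, so $C \in \upsets{\leq_{j(C)}}$, while $W \in \upsets{\leq_k}$ for every $k$ because every chain ends at $W$; writing $C = C \cap \bigcap_{k \neq j(C)} W$ exhibits $C$ as a member of the join.

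It remains to build, for a given $C \in \C$, a saturated chain of convex sets from $\emptyset$ to $W$ through $C$. The downward part, from $C$ to $\emptyset$, uses extreme points: if $C$ is nonempty then $C$ has an extreme point $x$, and $x \notin \hull{C \setminus \{x\}}$ forces $\hull{C \setminus \{x\}} = C \setminus \{x\}$, so $C \setminus \{x\} \in \C$; iterating, and using $\emptyset \in \C$, reaches $\emptyset$. Existence of an extreme point follows by picking a maximal proper convex subset $D \subsetneq C$ (one exists since $\emptyset \in \C$): were $|C \setminus D| \geq 2$, distinct $x,y \in C \setminus D$ and anti-exchange applied to $D$ would yield a convex $E \supseteq D$ separating $x$ from $y$, whence $E \cap C$ is convex and strictly between $D$ and $C$, contradicting maximality; so $|C \setminus D| = 1$ and the removed point is extreme. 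The upward part, from $C$ to $W$, needs the dual \emph{co-accessibility}: for convex $C \neq W$ there is $x \notin C$ with $C \cup \{x\} \in \C$. I expect this to be the main obstacle, and would prove it by choosing $x \notin C$ so that $\hull{C \cup \{x\}}$ is minimal under inclusion; if $C \cup \{x\}$ were not already convex, pick $y \in \hull{C \cup \{x\}} \setminus (C \cup \{x\})$ and apply anti-exchange to $C$ and the pair $x \neq y$. The alternative where the separating convex set contains $x$ but not $y$ contradicts $y \in \hull{C \cup \{x\}}$, while the alternative containing $y$ but not $x$ yields $\hull{C \cup \{y\}} \subsetneq \hull{C \cup \{x\}}$, contradicting minimality. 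Concatenating the downward and upward parts gives the required saturated chain, which completes the argument.
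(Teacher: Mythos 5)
Your proof is correct, but it is genuinely different from what the paper does: the paper does not prove Theorem~\ref{decomposition theorem} at all, it simply quotes it as Theorem~5.2 of \cite{Edelman85} and uses it as a black box. What you have done is reconstruct, from the closure under intersections and the anti-exchange property alone, essentially the classical chain argument that lies behind that citation. Your reduction of the theorem to the existence of saturated chains of convex sets through any prescribed $C \in \C$ is sound, since the upset convexities of linear orders are exactly such maximal chains in $\powerset W$, and both inclusions of \eqref{decomposition} then follow from the concrete description of the join exactly as you say. The two lemmas you isolate are precisely the standard accessibility properties of convex geometries: downward, every nonempty convex set loses an extreme point and stays convex, which your maximal-proper-convex-subset argument establishes correctly (anti-exchange forces $|C \setminus D| = 1$); upward, every convex $C \neq W$ admits a one-point augmentation $C \cup \{x\} \in \C$, and your minimal-hull argument for this is also correct: choosing $x \notin C$ with $\hull{C \cup \{x\}}$ inclusion-minimal, the anti-exchange alternative with $x \in E$, $y \notin E$ contradicts $y \in \hull{C \cup \{x\}} \subseteq E$, while the alternative with $y \in E$, $x \notin E$ gives $x \notin \hull{C \cup \{y\}} \subseteq E$ and hence $\hull{C \cup \{y\}} \subsetneq \hull{C \cup \{x\}}$, contradicting minimality. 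Note also that your construction uses the hypothesis $\emptyset \in \C$ exactly where it is needed, namely to terminate the downward chain. The paper's approach buys brevity and defers to the literature; yours buys self-containedness and makes visible that the decomposition rests on nothing more than these two augmentation/deletion properties. The only inefficiency is that you take one chain per convex set, so your $m$ is of size $|\C|$, whereas maximal chains can be reused to cover many convex sets at once (the minimal possible $m$ is the so-called convex dimension); this is harmless for the statement being proved.
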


Note that from the definition of the join it follows that if the posets
$(\leq_i)_{i = 1}^k$ are a decomposition of a convex geometry $\C$
according to \eqref{decomposition} then some set $X \subseteq W$ is
convex if and only if it can be written as
\begin{equation} \label{convex as upsets}
 X = \bigcap_{j = 1}^{m} \upseti{j}{X},
\end{equation}
where $\upseti{j}{X} = \{w \in W \mid x \leq_j w \mbox{ for some }x \in
X\}$ denotes the upwards closure of $X$ in the order $\leq_j$.

\begin{figure}
\begin{center}
\begin{tikzpicture}[scale=1.4]
  \node (innerup) at (3.8,3) {
\begin{tikzpicture}[scale=0.7]
 \node (w1) at (0,0) {$p \bar{q}$};
 \node (w2) at (0,1.5)  {$\bar{p} \bar{q}$};
 \node (w3) at (0,3)  {$p q$};
 \node (w4) at (0,4.5) {$\bar{p} q$};

 \node (u1) at (1.8,0) {$p q$};
 \node (u2) at (1.8,1.5)  {$\bar{p} \bar{q}$};
 \node (u3) at (1.8,3)  {$p \bar{q}$};
 \node (u4) at (1.8,4.5) {$\bar{p} q$};

 \node (v1) at (3.6,0) {$p q$};
 \node (v2) at (3.6,1.5)  {$p \bar{q}$};
 \node (v3) at (3.6,3)  {$\bar{p} q$};
 \node (v4) at (3.6,4.5) {$\bar{p} \bar{q}$};

 \draw (w1) -- (w2) -- (w3) -- (w4);
 \draw (u1) -- (u2) -- (u3) -- (u4);
 \draw (v1) -- (v2) -- (v3) -- (v4);
\end{tikzpicture}
};
  \node (innerdown) at (3.8,-3) {
$\begin{array}{l}
 \alpha = (\top \cond p) \land 
  (q \cond p) \land \phantom{d} \\
  (\neg (p \leftrightarrow q) \cond p) \land 
  \neg (\neg q \cond p) \land \phantom{d} \\
  \neg ((p \leftrightarrow q) \cond p) \land 
  \neg (\neg p \cond \neg q)
 \end{array}$
};
  \draw [dotted] (-2.7,0) -- (5,0);
  \draw [dotted] (0,-4.3) -- (0,4.3);

  \draw[->] (0,0) -- (0:4.5);
  \draw[->] (0,0) -- (120:4.5);
  \draw[->] (0,0) -- (240:4.5);


 \node[fill=white] (w1) at (0:4) {$p \bar{q}$};
 \node[fill=white] (w2) at (0:3) {$\bar{p} \bar{q}$};
 \node[fill=white] (w3) at (0:2) {$p q$};
 \node[fill=white] (w4) at (0:1) {$\bar{p} q$};

 \node[fill=white] (u1) at (120:4) {$p q$};
 \node[fill=white] (u2) at (120:3) {$\bar{p} \bar{q}$};
 \node[fill=white] (u3) at (120:2) {$p \bar{q}$};
 \node[fill=white] (u4) at (120:1) {$\bar{p} q$};

 \node[fill=white] (v1) at (240:4) {$p q$};
 \node[fill=white] (v2) at (240:3) {$p \bar{q}$};
 \node[fill=white] (v3) at (240:2) {$\bar{p} q$};
 \node[fill=white] (v4) at (240:1) {$\bar{p} \bar{q}$};
\end{tikzpicture}
\end{center}
\caption{In the upper right corner is a decomposition of the convex
geometry from Example~\ref{running example} into linear orders. The main
picture contains the representation of this convex geometry in the
plane. In the lower right corner is the formula $\alpha$ from
Example~\ref{running example} that is true in this convex geometry.}
\label{pretty representation}
\end{figure}


\subsection{The representation by Richter and Rogers}

\newcommand{\point}[2]{{u({#2},{#1})}}
\newcommand{\rank}[2]{{r_{#2}({#1})}}

This subsection contains the proof of Theorem~5 in \cite{Richter17}. For
this paper we need following formulation of the representation result:
\begin{maintheorem} \label{representation theorem}
 Let $\C$ be a convex geometry over a finite set $W$ such that
$\emptyset \in \C$. Then there is a finite set $U \subseteq \R^2$ and a
strong morphism of convex geometries $r$ from $(W,\C)$ to $U$
with the relative convexity from $\R^2$.
\end{maintheorem}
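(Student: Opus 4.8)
The plan is to make the Richter--Rogers representation explicit and then verify that the function labelling each polygon vertex by the world it represents is a strong morphism. First I would apply the decomposition Theorem~\ref{decomposition theorem} (this is where the hypothesis $\emptyset \in \C$ is used) to write $\C = \bigvee_{j=1}^m \upsets{\leq_j}$ for linear orders $\leq_1,\dots,\leq_m$ on $W$; padding the list with repetitions of an order leaves the join unchanged, so I may assume $m \geq 3$. Each linear order $\leq_j$ on the $n$-element set $W$ gives a rank bijection $\rank{\cdot}{j} : W \to \{1,\dots,n\}$, where $\rank{w}{j}$ counts the elements $\leq_j w$. By \eqref{convex as upsets} a set $X \subseteq W$ is convex exactly when $X = \bigcap_{j=1}^m \upseti{j}{X}$, and more generally $\bigcap_{j=1}^m \upseti{j}{X}$ is the convex hull $\hull{X}$ in $\C$; this is the combinatorial identity the geometry must reproduce.

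Second, I would place the points. Fix $m$ equally spaced rays from the origin at angles $2\pi j/m$, and for each world $w$ and index $j$ put the vertex $\point{w}{j}$ on ray $j$ at radial distance $N - \rank{w}{j}$, so that $\leq_j$-minimal worlds sit farthest out and $\leq_j$-maximal worlds sit closest to the origin; here $N$ is a large constant fixed later. Let $P_w = \hull{\{\point{w}{1},\dots,\point{w}{m}\}}$ be the standard convex hull in $\R^2$ of the $m$ vertices of $w$. Since these lie on distinct rays at positive distances, each $\point{w}{j}$ is an extreme point of $P_w$, and since ranks are injective on each ray all the $\point{w}{j}$ are pairwise distinct. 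I then set $U = \{\point{w}{j} \mid w \in W,\ 1 \leq j \leq m\}$, equip it with the relative convexity from $\R^2$, and define $r : U \to W$ by $r(\point{w}{j}) = w$. This is well defined, and $r^{-1}(\{w\})$ is exactly the vertex set of $P_w$, so that $\uniimage{r}(K \cap U) = \{w \in W \mid P_w \subseteq K\}$ for every plane-convex $K$.

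The heart of the argument is the geometric lemma that for all $w \in W$ and $X \subseteq W$,
\[
 P_w \subseteq \hull{\textstyle\bigcup_{x \in X} P_x} \iff w \in \textstyle\bigcap_{j=1}^m \upseti{j}{X}.
\]
Granting this, both morphism conditions are immediate. For the morphism property, a relatively convex $D$ equals $K \cap U$ with $K = \hull{D}$; then $\uniimage{r}(D) = \{w \mid P_w \subseteq K\}$, and if $w \in \bigcap_j \upseti{j}{\uniimage{r}(D)}$ the lemma places $P_w$ inside $\hull{\bigcup_{y \in \uniimage{r}(D)} P_y} \subseteq K$, whence $w \in \uniimage{r}(D)$; so $\uniimage{r}(D)$ is convex in $\C$. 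For strongness, given $C \in \C$ I take $K = \hull{\bigcup_{w \in C} P_w}$ and $D = K \cap U$; the lemma together with $C = \bigcap_j \upseti{j}{C}$ yields $\uniimage{r}(D) = \{w \mid P_w \subseteq K\} = C$, so every convex set of $\C$ is realised.

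I expect this lemma to be the main obstacle, and it is where the choice of $N$ matters. The direction $\Leftarrow$ is easier: if $w \in \bigcap_j \upseti{j}{X}$ (and $X \neq \emptyset$) then on each ray $j$ some $x \in X$ has $\rank{x}{j} \leq \rank{w}{j}$, so $\point{w}{j}$ is no farther from the origin than the outermost $X$-vertex on ray $j$; as $m \geq 3$ these outermost vertices enclose the origin, and their convex hull---contained in $\hull{\bigcup_x P_x}$---already contains every such inner vertex, giving $P_w \subseteq \hull{\bigcup_x P_x}$. The direction $\Rightarrow$ needs a separating line: if $w \notin \upseti{j_0}{X}$ then $\rank{w}{j_0} < \rank{x}{j_0}$ for all $x \in X$, so $\point{w}{j_0}$ lies strictly beyond every $X$-vertex on ray $j_0$. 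Projecting onto the direction of ray $j_0$, an $X$-vertex on a different ray contributes only its distance scaled by $\cos(2\pi k/m) < 1$, so choosing $N$ large compared to $n$ makes $N(1-\cos(2\pi/m))$ dominate all rank differences; the on-axis outermost $X$-vertex then maximises this functional and a line orthogonal to ray $j_0$ separates $\point{w}{j_0}$ from $\hull{\bigcup_x P_x}$. Making this estimate uniform in $m$, ruling out the degenerate small cases, and confirming that no stray vertex of another polygon accidentally lands inside some $P_w$, is the delicate quantitative part of the proof.
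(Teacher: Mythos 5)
Your proposal is correct and takes essentially the same route as the paper: the same decomposition into linear orders, the same placement of vertices on $m$ equally spaced rays at rank-determined radial distances with a large safety constant (your $N$ corresponds to the paper's $s$ via $N = s + n + 1$ after flipping the rank convention), and the same two geometric arguments, namely origin-in-hull plus radial monotonicity for the morphism property and a separating projection onto a ray direction for strongness. Packaging both directions into the single biconditional lemma $P_w \subseteq \hull{\bigcup_{x \in X} P_x}$ iff $w \in \bigcap_{j} \upseti{j}{X}$, and taking $m \geq 3$ so that the origin automatically lies in every polygon (avoiding the paper's parity case analysis), are only organizational differences.
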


We first describe how to construct the set $U$ and the function $r$. Fix
a convex geometry $(W,\C)$ such that $\emptyset \in \C$ and let $n$ be
the number of elements in $W$. By Theorem~\ref{decomposition theorem}
there exists a decomposition of $\C$ into linear orders $(\leq_i)_{i =
1}^m$. Assume without loss of generality that $m \geq 2$, otherwise just
duplicate one of the linear orders. For every $w \in W$ and $j \in
\{1,\dots,m\}$ let $\rank{w}{j} \in \N \subseteq \R$ be the rank of $w$
in the linear order $\leq_j$ starting from the top. This means that if
$\leq_j$ is $w_n <_j w_{n - 1} <_j \dots <_j w_1$ then $\rank{w}{j} = i$
for the unique $i$ with $w_i = w$.

We then choose $m$-many points on the unit circle that are equally
distributed among all directions. Thus, set $d_j = (\cos(2\pi
j/m),\sin(2\pi j/m)) \in \R^2$ for every $j \in \{1,\dots,m\}$. Define
$s \in \R$ as
\[
 s = \max\left\{0,\frac{n \cos(2\pi/m)}{1 - \cos(2\pi/m)} \right\}.
\]

For every $w \in W$ and $j \in \{1,\dots,m\}$ define the point \[
\point{w}{j} = (s + \rank{w}{j}) d_j \in \R^2, \] and for every $w \in
W$ define the set $U^w = \{\point{w}{1} ,\dots,\point{w}{m}\}$. Clearly
we have that $U^w \cap U^u = \emptyset$ whenever $w \neq u$. Define $U
\subseteq \R^2$ as $U = \bigcup_{w \in W} U^a$ and $r : U \to W$ such
that $r(u)$ is the unique $w \in W$ with $u \in U^w$. Note that
$r^{-1}(\{w\}) = U^w$ for all $w \in W$.

The idea behind the definition of $U$ is to spread out the linear orders
in the decomposition of $\C$ along separate rays that move outwards from
the origin. On each ray this happens at distance $s$ away from the
origin. This safety distance ensures that every point on some ray is
further out from the origin than the intersection of the ray with any
line segment between points on neighboring rays.

Theorem~\ref{representation theorem} then follows from the following two lemmas.
\begin{lemma}
 $r$ is a morphism of convex geometries.
\end{lemma}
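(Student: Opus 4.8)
By the definition of a morphism of convex geometries and since $r : U \to W$, the claim amounts to showing that $\uniimage{r}(D) \in \C$ for every $D$ in the relative convexity $\D$ of $U$. Fix such a $D$ and let $E = \hull{D}$ be its convex hull in $\R^2$, which is convex. Since $r^{-1}(\{w\}) = U^w \subseteq U$ for every $w$, we have $\uniimage{r}(D) = \{w \in W \mid U^w \subseteq D\} = \{w \in W \mid U^w \subseteq E\} =: X$, so it suffices to prove $X \in \C$.

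From the decomposition $\C = \bigvee_{j=1}^m \upsets{\leq_j}$ together with the concrete description of the join, the set $\bigcap_{j=1}^m \upseti{j}{Y}$ lies in $\C$ for every $Y \subseteq W$, and trivially $Y \subseteq \bigcap_{j=1}^m \upseti{j}{Y}$. Applying this to $Y = X$, it is enough to prove the reverse inclusion $\bigcap_{j=1}^m \upseti{j}{X} \subseteq X$, for then $X = \bigcap_{j=1}^m \upseti{j}{X} \in \C$. We may assume $m \geq 3$, since duplicating a linear order does not change the join in~\eqref{decomposition}. If $X = \emptyset$ we are done because $\emptyset \in \C$ by hypothesis, so fix some $x_0 \in X$.

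Let $w \in \bigcap_{j=1}^m \upseti{j}{X}$; we must show $U^w \subseteq E$, that is, $\point{w}{j} \in E$ for each $j$. Fix $j$. Reading the upset condition in terms of ranks, $w \in \upseti{j}{X}$ yields an $x_j \in X$ with $\rank{x_j}{j} \geq \rank{w}{j}$, and since $U^{x_j} \subseteq E$ the point $\point{x_j}{j}$ lies in $E$ on ray $j$ at radius $s + \rank{x_j}{j} \geq s + \rank{w}{j}$; this is an outer anchor on ray $j$ at least as far from the origin as $\point{w}{j}$. For an inner anchor, note that $\point{x_0}{j-1}$ and $\point{x_0}{j+1}$ lie in $E$ because $U^{x_0} \subseteq E$, so the segment joining them lies in $E$ and meets ray $j$ in a point of $E$. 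Granting that this intersection point is no farther from the origin than $\point{w}{j}$, the point $\point{w}{j}$ lies on the portion of ray $j$ between two points of the convex set $E$, hence $\point{w}{j} \in E$, as required.

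The only genuinely geometric step, and the sole use of the safety distance $s$, is the radius estimate just invoked. Placing ray $j$ along the positive horizontal axis, the segment between points at radii $\rho_1, \rho_2$ on the two neighbouring rays $j \pm 1$ meets ray $j$ at radius $\cos(2\pi/m)\cdot 2\rho_1\rho_2/(\rho_1 + \rho_2) \leq \cos(2\pi/m)\max(\rho_1,\rho_2) \leq (s+n)\cos(2\pi/m)$, and the choice $s = \max\{0,\, n\cos(2\pi/m)/(1-\cos(2\pi/m))\}$ is precisely what forces $(s+n)\cos(2\pi/m) \leq s \leq s + \rank{w}{j}$. This step is also where $m \geq 3$ is needed, so that rays $j-1, j, j+1$ are distinct and ray $j$ lies strictly between its neighbours; everything else is the order-theoretic translation between $\C$ and containment of the fibres $U^w$ in convex subsets of the plane. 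I expect this radius computation to be the main obstacle, the remainder being routine.
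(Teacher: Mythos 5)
Your proof is correct in substance and shares the paper's skeleton---reduce $\uniimage{r}(D) \in \C$ to the fixed-point characterization \eqref{convex as upsets} and then trap each point $\point{w}{j}$ on ray $j$ between an outer anchor coming from the witness $x_j$ and an inner anchor in the hull---but your key geometric step is genuinely different. The paper's inner anchor is the origin, used for all rays at once: it first shows $(0,0) \in \hull{C}$ from points of witness fibres lying on rays that straddle the horizontal axis (a short parity case split on $m$, no metric estimate at all), after which $\point{w}{j}$ lies on the segment from the origin to the corresponding point of the witness's fibre on ray $j$. Consequently the safety distance $s$ plays no role whatsoever in the paper's morphism lemma; all metric work is reserved for the strongness lemma. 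Your inner anchor is instead the crossing of the line through ray $j$ with the segment joining $\point{x_0}{j-1}$ and $\point{x_0}{j+1}$, which is exactly where the harmonic-mean radius formula and the defining inequality $(s+n)\cos(2\pi/m) \leq s$ enter, so you end up redoing a version of the computation that the paper needs only for strongness. Both routes work: the paper's buys a computation-free morphism lemma and a clean division of labour, yours avoids the origin argument at the cost of repeating the $s$-estimate. Two small repairs to your write-up: (i) when $\cos(2\pi/m) < 0$ (that is, $m = 3$) your chain reverses---multiplying $2\rho_1\rho_2/(\rho_1+\rho_2) \leq \max(\rho_1,\rho_2)$ by a negative cosine flips the inequality---and the segment in fact meets the ray \emph{opposite} to ray $j$; the needed bound still holds trivially there, since the crossing has nonpositive first coordinate while $s = 0$, but this case should be split off, just as the paper splits on the sign of $\cos(2\pi/m)$ in its strongness proof. (ii) Your ``WLOG $m \geq 3$'' replaces the given construction by one with a duplicated order, hence with a different $U$ and $r$; this is harmless for Theorem~\ref{representation theorem}, since one may always build the representation with $m \geq 3$, but strictly speaking the lemma concerns the $r$ at hand---alternatively, observe that for $m = 2$ your argument degenerates gracefully, the ``segment'' collapsing to the single point $\point{x_0}{j+1} = \point{x_0}{j-1}$ on the opposite ray, which still serves as inner anchor.
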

\begin{proof}
We need to show that whenever $C \subseteq U$ is convex in the relative
convexity of $U$ in $\R^2$ then $\uniimage{r}(C) \in \C$. Thus fix such
a $C$ and let $D = \uniimage{r}(C)$. To show that $D$ is convex in $\C$
we use the characterization \eqref{convex as upsets} and show that $D =
\bigcap_{j = 1}^{m} \upseti{j}{D}$. For the non-trivial
$\supseteq$-inclusion consider any $w$ such that for all $j \in
\{1,\dots,m\}$ there is some $w_j \in D$ such that $w_j \leq_j w$. To
prove $w \in D = \uniimage{r}(C)$ we need to show that $U^w \subseteq
\hull{C}$.

First observe that the origin $(0,0)$ is in the convex hull $\hull{C}$
of $C$ in $\R^2$. This is a little technical but not very interesting:
If $n$ is even then the origin can be written as a convex combination of
the points $\point{m}{w_m}$ and $\point{m/2}{w_{m/2}}$ in $C$ because
both points have $0$ in their second coordinate, and the former has a
positive but the latter a negative first coordinate. If $n$ is odd then
$n \geq 3$ and the points $\point{j}{w_j}$ and $\point{k}{w_k}$, for $j
= (m - 1)/2$ and $k = (m + 1)/2$, are in $C$. They both have a negative
first coordinate and a different signum in their second coordinates.
Thus there is some point $s \in \hull{C}$ that has a negative first
coordinate and $0$ in the second coordinate. The origin is then a convex
combination of $s$ and $\point{m}{w_m}$.

Consider then any point in $U^w$, which must be of the form
$\point{j}{w}$ for some $j \in \{1,\dots,m\}$. Because $D =
\uniimage{r}(C)$ and $w_j \in D$ we have that $\point{j}{w_j} \in
U^{w_j} = f^{-1}(\{w_j\}) \subseteq C$. Moreover, from $w_j \leq w$ it
follows that $\rank{w}{j} \leq \rank{w_j}{j}$ and hence $\point{j}{w} =
(s + \rank{w}{j}) d_j$ is on the line segment from the origin to
$\point{j}{w_j} = (s + \rank{w_j}{j}) d_j$. It follows that
$\point{j}{w} \in \hull{C}$ and thus that $\point{j}{w} \in C$, since
$C$ is convex in the relative convexity.
\end{proof}

\begin{lemma}
 $r$ is a strong morphism of convex geometries.
\end{lemma}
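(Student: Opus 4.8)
The plan is to show that $r$ is strong by establishing the converse of the morphism property already proved: since $\uniimage{r}(D) \in \C$ for every $D$ convex in the relative convexity $\D$ of $U$ in $\R^2$, it remains to prove that every $X \in \C$ arises as $\uniimage{r}(D)$ for some such $D$. I would take $D$ to be the relative convex hull of the representatives of $X$, namely $D = U \cap \hull{\bigcup_{w \in X} U^w}$, which lies in $\D$ by construction, and then verify $\uniimage{r}(D) = X$, recalling that $\uniimage{r}(D) = \{w \in W \mid U^w \subseteq D\}$.

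The inclusion $X \subseteq \uniimage{r}(D)$ is immediate, because for $w \in X$ we have $U^w \subseteq \bigcup_{w' \in X} U^{w'} \subseteq \hull{\bigcup_{w' \in X} U^{w'}}$ together with $U^w \subseteq U$. For the reverse inclusion I would use the characterization \eqref{convex as upsets} of convex sets as $X = \bigcap_{j=1}^m \upseti{j}{X}$, so that it suffices, for a given $w$ with $U^w \subseteq D$, to find for each $j$ some $x \in X$ with $x \leq_j w$, placing $w$ in $\upseti{j}{X}$. Writing $S = \bigcup_{w' \in X} U^{w'}$, the hypothesis says $\point{w}{j} \in \hull{S}$ for every $j$, and I would reduce the whole problem to this single ray-wise claim.

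The heart of the argument, and the step I expect to be the main obstacle, is a geometric estimate controlled by the safety distance $s$. Projecting onto the direction $d_j$, a point $\point{w'}{k}$ with $k \neq j$ has $d_j$-component $(s + \rank{w'}{k}) \cos(2\pi(k-j)/m)$, which is largest for a nearest neighbour and so bounded by $(s + n)\cos(2\pi/m)$; the defining inequality for $s$ rearranges to $(s + n)\cos(2\pi/m) \leq s$, so off-ray points contribute $d_j$-component at most $s$, whereas the on-ray points $\point{w'}{j}$ with $w' \in X$ reach $s + \rank{w'}{j}$. Since $\langle \cdot, d_j \rangle$ is linear, its maximum over $\hull{S}$ is attained on $S$ and equals $s + \max_{x \in X}\rank{x}{j}$, so $\point{w}{j} \in \hull{S}$ forces $\rank{w}{j} \leq \rank{x}{j}$ for some $x \in X$, which by the definition of the ranks is exactly $x \leq_j w$. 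This yields $w \in \upseti{j}{X}$ for every $j$ and hence $w \in X$. The delicate points I would still need to nail down are that the off-ray bound holds uniformly across all angular separations and in the degenerate small-$m$ cases, and the boundary case $X = \emptyset$, where $D = \emptyset \in \D$ works because every $U^w$ is nonempty.
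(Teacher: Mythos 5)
Your proof is correct and is essentially the paper's own argument: the same witness set $C = \hull{r^{-1}(D)} \cap U$, the same projection of would-be convex combinations onto a ray direction, and the same use of the defining inequality $(s+n)\cos(2\pi/m) \leq s$; the paper merely packages the second inclusion contrapositively (choosing, for $w \notin D$, one ray $j$ with $w <_j u$ for all $u \in D$, rotating it onto the positive $x$-axis, and showing that the first coordinate of $w$'s point on that ray strictly exceeds that of every generator of the hull), whereas you verify membership directly, ray by ray, via \eqref{convex as upsets}. The one delicate point you flagged is real but is settled exactly as in the paper: when $\cos(2\pi/m) < 0$ (i.e.\ $m \leq 3$) your uniform bound by $(s+n)\cos(2\pi/m)$ does fail, but in that case $s = 0$ and every off-ray cosine is negative, so off-ray points still have $d_j$-component at most $s$, which is all your argument needs.
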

\begin{proof}
To show that $r$ is strong consider any $D \in \C$. We show that $D =
\uniimage{r}(C)$ for $C = \hull{r^{-1}(D)} \cap U$. That $D \subseteq
\uniimage{r}(C)$ follows immediately from $r^{-1}(D) \subseteq C$. To
show $D \supseteq \uniimage{r}(C)$ consider any $w \notin D$. We show
that $w \notin \uniimage{r}(C)$.

Because $D$ is convex we can apply the characterization from
\eqref{convex as upsets} and conclude that there is some $j \in
\{1,\dots,m\}$ such that $w <_j u$ for all $u \in D$. We then assume
that $j = m$. This is without loss of generality because one can apply a
rotation to turn any ray for $j$ until it comes to lie on the positive
$x$-axis. Because rotations are isomorphism with respect to the convex
sets this does not influence our reasoning.

To show that $w \notin \uniimage{r}(C)$ it suffices to show that
$\point{m}{w} \notin \hull{r^{-1}(D)}$. To this aim we show that the
first coordinate of $\point{m}{w} = (s + \rank{w}{m}) d_k$ is strictly
larger than the first coordinate of any $\point{k}{v} = (s +
\rank{v}{k}) d_k$ for $v \in D$ and $k \in \{1,\dots,m\}$, meaning that
$\point{m}{w}$ can not be written as the convex combination of such
points. If $k = m$ then this is clear because $d_m = (1,0)$ and
$\rank{w}{m} > \rank{v}{m}$, as $w <_m v$. In the other case where $k
\neq m$ first consider the case where $\cos(2\pi/m) \geq 0$. Then $0
\leq \frac{n \cos(2\pi/m)}{1 - \cos(2 \pi/m)} = s$ and we can estimate
the first coordinate of $\point{k}{v}$ as follows:
\begin{align*}
 (s + \rank{v}{k}) \cos(2\pi k / m) & \leq (s + n) \cos(2\pi k / m) \\
 & \leq (s + n) \cos(2\pi / m) \\
 & \leq \left(\frac{n\cos(2\pi / m)}{1 - \cos(2\pi /m )} + n\right) \cos(2\pi / m) \\
 & = \left(\frac{n}{1 - \cos(2\pi /m )}\right) \cos(2\pi / m) \\
 & \leq s \\
 & < s + \rank{w}{m}
\end{align*}
Because $s + \rank{w}{m}$ is the first coordinate of $\point{m}{w}$ this
is the needed inequality. In the other case where $\cos(2\pi / m) < 0$
we get that $m \leq 3$. Thus, $k / m$ is either $1/3$, $2/3$ or $1/2$
and so $\cos(2\pi k / m)$ is negative. It follows that the first
coordinate of $\point{k}{v}$ is also negative and therefore it is
smaller than the first coordinate of $\point{m}{w}$.
\end{proof}

\section{Completeness for Euclidean convexity}
\label{euclid}

In this last section we put the results from this paper together to
prove the completeness of preferential conditional logic with respect to
convexity between points in the plane. We also show that this result can
not be improved to a completeness result with respect to convexity on
the real line.

The following is the main result of this paper:
\begin{maintheorem} \label{completeness theorem}
 Every one-step formula $\varphi \in \Lang_1$ that is consistent in
preferential conditional logic is true in a model of the form $M =
(W,\C,V)$, where $W \subseteq \R^2$ is a finite set of points and $\C$
is the relative convexity of $W$ in $\R^2$.
\end{maintheorem}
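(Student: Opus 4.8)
The plan is to obtain the desired Euclidean model by chaining together the three main ingredients assembled in the previous sections: the abstract completeness of Theorem~\ref{abstract completeness}, the planar representation of Theorem~\ref{representation theorem}, and the adequacy of strong morphisms in Theorem~\ref{adequacy}. Starting from a consistent $\varphi \in \Lang_1$, Theorem~\ref{abstract completeness} hands me a finite model $M_0 = (W_0,\C_0,V_0)$ with $M_0 \models \varphi$. The target is a model whose convex geometry is the relative convexity of some finite $U \subseteq \R^2$, and the bridge between the two is Theorem~\ref{adequacy}: since strong morphisms of finite models preserve and reflect the truth of every formula in $\Lang_1$, it suffices to connect $M_0$ to such a planar model by a strong morphism.

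Concretely, I would feed $(W_0,\C_0)$ into Theorem~\ref{representation theorem}, obtaining a finite set $U \subseteq \R^2$ with its relative convexity $\C_U$ together with the strong morphism of convex geometries $r \colon U \to W_0$. To upgrade $r$ to a morphism of \emph{models} I would define the valuation on $U$ by pulling back along $r$, setting $V_U(p) = r^{-1}(V_0(p))$ for each $p \in \Prop$; this is exactly the condition required for $r$ to be a model morphism from $M_U = (U,\C_U,V_U)$ to $M_0$, and it leaves $r$ strong, as strength is a property of the underlying geometries alone. Theorem~\ref{adequacy} then gives $M_U \models \varphi$ iff $M_0 \models \varphi$, and hence $M_U \models \varphi$. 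Since $U \subseteq \R^2$ is finite and $\C_U$ is by construction the relative convexity of $U$ in $\R^2$, the model $M_U$ has exactly the shape demanded by the theorem.

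The only genuine gap in this chain is a mismatch of hypotheses, and this is where I expect the real care to be needed: Theorem~\ref{representation theorem} requires $\emptyset \in \C_0$, whereas the model produced by Theorem~\ref{abstract completeness} need not have the empty set convex, as it may contain impossible worlds in $\bigcap \C_0$. I would close this gap with Proposition~\ref{getting rid of impossible worlds}: deleting the impossible worlds $I = \bigcap \C_0$ yields a submodel on $W_0 \setminus I$ whose relative convexity does contain $\emptyset$, and the inclusion is a strong morphism. Applying Theorem~\ref{adequacy} once more shows that this reduced model still satisfies $\varphi$, so I may assume $\emptyset \in \C_0$ before invoking the representation. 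Assembling the steps in the order abstract completeness, elimination of impossible worlds, planar representation, transport of the valuation along $r$, and adequacy then yields the result; apart from checking that the pulled-back valuation makes $r$ a model morphism, every step is a direct appeal to an earlier theorem.
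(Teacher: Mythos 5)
Your proposal is correct and follows essentially the same route as the paper: abstract completeness, elimination of impossible worlds via Proposition~\ref{getting rid of impossible worlds} to secure $\emptyset \in \C$, the planar representation of Theorem~\ref{representation theorem}, pull-back of the valuation along the strong morphism, and Theorem~\ref{adequacy}. The only (immaterial) difference is bookkeeping: the paper composes the two strong morphisms and applies Theorem~\ref{adequacy} once, whereas you transport the valuation and apply adequacy at each of the two stages.
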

\begin{proof}
 From Theorem~\ref{abstract completeness} we obtain a finite model $M''
= (W'',\C'',V'')$ such that $M'' \models \varphi$. From
Proposition~\ref{getting rid of impossible worlds} we get a finite convex
geometry $(W',\C')$ with $\emptyset \in \C'$ and strong morphism of
convex geometries $r''$ from $(W',\C')$ to $(W'',\C'')$. We can then
apply Theorem~\ref{representation theorem} to obtain a finite set $W
\subseteq \R^2$ together with a strong morphism $r'$ from $(W,\C)$ to
$(W',\C')$ such that $\C$ is the relative convexity of $W$ in $\R^2$.

Let $r = r' \circ r''$ be the composition of $r''$ with $r'$. Clearly,
this is also a strong morphism of convex geometries from $(W,\C)$ to
$(W'',\C'')$. Then define the model $M = (W,\C,V)$ such that $V(p) =
r^{-1}(V''(p))$ for all $p \in \Prop$. This turns $r$ into a strong
morphism from the model $M$ to the model $M''$ and thus $M \models
\varphi$ follows with Theorem~\ref{adequacy}.
\end{proof}

\begin{remark} \label{nested completeness}
 To adapt this completeness result to nested preferential conditional
logic one would need to consider models $(W,U,V)$ where $W \subseteq
\R^2$ and $U : W \to \powerset W$. The function $U$ fixes a finite set
of points $U(w)$ for every world $w \in W$. At a worlds $w \in W$ a
conditional is then evaluated in the relative convexity of $U(w)$ in
$\R^2$. Completeness with respect to such models can be obtained by
starting from a model in the semantics from Remark~\ref{not one-step}
and then applying Theorem~\ref{completeness theorem} locally to $\C(w)$
for every world $w$. By suitably translating the points in the sets
$U(w)$ one can ensure that $U(w) \cap U(w') = \emptyset$ whenever $w
\neq w'$. Thus, the valuation $V : \Prop \to \powerset W$ can be defined
globally on $W$.
\end{remark}

\label{line not enough}

The completeness result from Theorem~\ref{completeness theorem} can not
be improved to a completeness with respect to models based on subsets of
the real line. The reason is that such models validate additional
formulas that are not provable in preferential conditional logic. As a
first example consider the formula
\[
 \gamma_2 = (p \lor q \lor r \cond p \lor q) \lor (p \lor q \lor r \cond
p \lor r) \lor (p \lor q \lor r \cond q \lor r).
\]
It can be seen as a generalization of the formula $\gamma_1 = (p \lor q
\cond p) \lor (p \lor q \cond q)$, which is valid over linear orders.
Using soundness of the semantics over posets it is easy to see that
$\gamma_2$ is not derivable in preferential conditional logic. However,
one can show that $\gamma_2$ is true in all models of the form
$(W,\C,V)$, where $W \subseteq \R$ is finite and $\C$ is the relative
convexity of $W$ in $\R$. The argument is roughly that we just need to
consider the two propositional letters among $p$, $q$ and $r$ that are
true at the at most two extreme points of $\ext{p \lor q \lor r}$. Note
that these extreme points are simply the minimal and maximal elements of
$\ext{p \lor q \lor r}$ in the standard order of the reals.

Surprisingly, $\gamma_2$ can be invalidated if we allow $W$ to be an
infinite subset of $\R$. This shows that the conditional logic of finite
sets of points on the real line is different from the logic of the whole
real line. To invalidate $\gamma_2$ it suffices to consider a model
$(\R,\C,V)$, where $\C$ is the standard convexity on $\R$ and $V$ is
such that for every propositional letter in $\{p,q,r\}$ there are
arbitrarily large and arbitrarily small reals at which the propositional
letter is true.

The logic of infinite subsets of the real line is still stronger than
preferential conditional logic. To see this consider the formula
\[
 \delta_2 = (p \lor q \lor r \cond s) \rightarrow (p \lor q \cond s)
\lor (p \lor r \cond s) \lor (q \lor r \cond s).
\]
This formula is a generalization of the formula $\delta_1 = (p \lor q
\cond s) \rightarrow (p \cond s) \lor (q \cond s)$ expressing
disjunctive rationality, which is valid over interval orders. Using the
order semantics it is not hard to show that $\delta_2$ is not derivable
in preferential conditional logic. But $\delta_2$ is valid in models
that are based on the real line:
\begin{proposition}
 The formula $\delta_2$ is valid in all models of the form $(W,\C,V)$,
where $W \subseteq \R$ is any set of points on the line and $\C$ is the
relative convexity of $W$ in $\R$.
\end{proposition}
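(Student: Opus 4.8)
The plan is to reduce the conditional over subsets of the line to the behaviour of $\ext{p \lor q \lor r}$ at its two ends, and then to finish by a short pigeonhole argument. Write $A = \ext{p \lor q \lor r} = \ext{p} \cup \ext{q} \cup \ext{r}$ and $S = \ext{s}$. Since $W \subseteq \R$, the convex sets in $\C$ are exactly the order-convex subsets of $W$ (intersections of $W$ with real intervals), so the feasible sets in $\F$ are exactly the unions $L \cup R$ of a left ray $L = W \cap (-\infty,c)$ and a right ray $R = W \cap (d,\infty)$; crucially, each such ray is \emph{itself} feasible and is contained in $L \cup R$. I would work throughout with the dualized clause in terms of $\F$.

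The first step is a characterization lemma: for $X,S \subseteq W$ with $X \neq \emptyset$, the conditional (with antecedent denoting $X$ and consequent denoting $S$) holds iff there are reals $a,b$ with $\emptyset \neq X \cap (-\infty,a) \subseteq S$ and $\emptyset \neq X \cap (b,\infty) \subseteq S$. For the forward direction I apply the dual clause to a single left ray meeting $X$ and use that \emph{any feasible subset of a left ray is again a left ray}, which delivers the left witness $a$; the right witness $b$ is symmetric. For the backward direction I take an arbitrary feasible $F = L \cup R$ with $F \cap X \neq \emptyset$, note that $X$ then meets $L$ or meets $R$, and shrink $F$ to a sufficiently small sub-ray on that side (intersecting $L$, say, with the witness ray $W \cap (-\infty,a)$) so that the corresponding one-sided hypothesis applies. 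The reduction of feasible sets to single rays is exactly what makes the two one-sided conditions jointly sufficient.

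With the lemma in hand the proposition is immediate. Assuming $\top$-vacuously that $A \neq \emptyset$ (if $A = \emptyset$ all three antecedents denote $\emptyset$ and every disjunct holds vacuously), the hypothesis $M \models (p \lor q \lor r \cond s)$ gives, via the lemma, reals $a,b$ and witnesses $x_L \in A \cap (-\infty,a) \subseteq S$ and $x_R \in A \cap (b,\infty) \subseteq S$. Each of $x_L,x_R$ satisfies at least one of $p,q,r$; picking one such letter for each and taking the union, or padding with any third letter when they coincide, produces one of $\ext{p \lor q}$, $\ext{p \lor r}$, $\ext{q \lor r}$ that contains both $x_L$ and $x_R$. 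Call this set $X$. Since $X \subseteq A$ I get $X \cap (-\infty,a) \subseteq A \cap (-\infty,a) \subseteq S$ and $X \cap (b,\infty) \subseteq S$, and both are nonempty as they contain $x_L$ and $x_R$. The lemma then yields the matching disjunct of the consequent, so $\delta_2$ is true in $M$, and $M$ was arbitrary.

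The only genuine obstacle is that $W$ may be infinite, so $A$ need not have a minimum or maximum and the finite reformulation of Proposition~\ref{simpler clause} is unavailable. This is precisely what the characterization lemma absorbs: by phrasing the end conditions with open-ray thresholds $a,b$ and explicit interior witnesses $x_L,x_R$, rather than with extreme points, all limit and boundary cases are handled uniformly, and \emph{reusing the very same} $a,b$ for the subset $X$ sidesteps any separate nonemptiness argument. The pigeonhole step — that one fixed pair of letters already captures both the far-left and the far-right witness — is what collapses the three-letter statement $\delta_2$ to the two-ended behaviour of the line.
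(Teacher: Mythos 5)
Your overall architecture---a two-sided ray characterization of the conditional on the line followed by a pigeonhole step that reuses the same witnesses---is sound and close in spirit to the paper's proof, but your characterization lemma is false as stated, and the false direction is exactly the one you apply to the hypothesis. The problem is the insistence on \emph{open} rays $(-\infty,a)$ and $(b,\infty)$. Counterexample: take $W=\R$, $\ext{p}=[0,1]$, $\ext{q}=\ext{r}=\emptyset$, $\ext{s}=\{0,1\}$. Then $p\lor q\lor r\cond s$ is true (for any convex $C$ missing a point of $[0,1]$, one of $D=(0,\infty)$ or $D=(-\infty,1)$ serves as the required extension), but for every real $a$ the set $[0,1]\cap(-\infty,a)$ is either empty or contains points of $(0,1)$, so it is never a nonempty subset of $\{0,1\}$; the same holds on the right. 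Hence no open-ray witnesses $a,b$ exist and your second step cannot be carried out for this model. The root cause also infects your structural claims: the feasible sets of $W\subseteq\R$ are \emph{not} exactly unions of open rays (the complement of the convex set $W\cap(0,1)$ is $W\cap\bigl((-\infty,0]\cup[1,\infty)\bigr)$, which for $W=\R$ is no such union), and consequently ``any feasible subset of a left ray'' may be a \emph{closed} ray $W\cap(-\infty,c]$, from which no open-ray threshold can in general be manufactured: shrinking $a$ below $c$ can kill nonemptiness, while enlarging it past $c$ can break containment in $\ext{s}$. This boundary situation is precisely the one your final paragraph claims the open-ray phrasing handles ``uniformly.''

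The gap is repairable: state the lemma with feasible rays, i.e.\ downward- and upward-closed subsets of $W$ (equivalently, rays with open \emph{or} closed endpoints), prove the forward direction by applying the dual clause to $F=W\cap(-\infty,x]$ for $x$ in the antecedent's extension (with a separate remark for the degenerate case $F=W$, i.e.\ $x=\max W$), and then your backward direction and pigeonhole step go through verbatim, since the same two feasible rays can still be reused for the chosen pair of letters. With that repair your argument is a legitimate and somewhat more modular alternative to the paper's: the paper never isolates a characterization lemma, but instead picks letters $p_l,p_r$ whose extensions are coinitial and cofinal in $\ext{p\lor q\lor r}$ and verifies the chosen disjunct directly from the original clause, extending an arbitrary convex $C$ to a ray of the form $W \cap (-\infty,u)$. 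That formulation never needs to assert that a ray's trace on the antecedent lies inside $\ext{s}$, and thereby sidesteps the open/closed boundary issue entirely.
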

\begin{proof}
Consider a model $M = (W,\C,V)$ such that $W \subseteq \R$ and $\C$ is
the relative convexity of $W$ in $\R$. To show that $\delta_2$ is valid
assume that $M \models p \lor q \lor r \cond s$.

Define $p_r \in \{p,q,r\}$ such that for all $u \in \ext{p \lor q \lor
r}$ there is some $v \in \ext{p_r}$ with $u \leq v$. Such a $p_r$ must
exist. Otherwise, we have for all $a \in \{p,q,r\}$ a $u_a \in \ext{p
\lor q \lor r}$ such that $v < u_a$ for all $v \in \ext{a}$. This leads
to a contradiction by considering the maximum of $u_p$, $u_q$ and $u_r$,
which is in $\ext{p \lor q \lor r}$, but can not be in any of $\ext{p}$,
$\ext{q}$ and $\ext{r}$. Analogously, we define $p_l \in \{p,q,r\}$ such
that for all $u \in \ext{p \lor q \lor r}$ there is some $v \in
\ext{p_l}$ with $v \leq u$. Let $A = \{a_1,a_2\}$ be one of $\{p,q\}$,
$\{p,r\}$, or $\{q,r\}$ such that $\{p_r,p_l\} \subseteq A$.

We claim that then $M \models a_1 \lor a_2 \cond s$. To see this
consider any convex set $C \in \C$ such that $\ext{a_1 \lor a_2}
\nsubseteq C$. Thus, there is some world $u \in \ext{a_1 \lor a_2}$ such
that $u \notin C$. Because $C$ is convex it follows that the worlds in
$C$ are either all to the left or are all to the right of $u$. Assume
without loss of generality that all worlds of $C$ are to the left of
$u$, that is, $v < u$ for all $v \in C$. Let $C' = (-\infty,u)$ be the
convex set of all worlds that are strictly to the left of $u$. Clearly
$C \subseteq C'$ and $u \notin C'$. From the latter it follows that
$\ext{p \lor q \lor r} \nsubseteq C'$, because $u \in \ext{a_1 \lor a_2}
\subseteq \ext{p \lor q \lor r}$.

From the assumption that $M \models p \lor q \lor r \cond s$ it
follows that there is some convex set $D$ with $C' \cap \ext{p \lor q
\lor r} \subseteq D$ and $\ext{p \lor q \lor r} \nsubseteq D$ such that
$\ext{p \lor q \lor r} \subseteq D \cup \ext{s}$. From $C' \cap \ext{p
\lor q \lor r} \subseteq D$ it follows that $C \cap \ext{a_1 \lor a_2}
\subseteq D$ and from $\ext{p \lor q \lor r} \subseteq D \cup \ext{s}$
it follows that $\ext{a_1 \lor a_2} \subseteq D \cup \ext{s}$. It thus
only remains to be seen that $\ext{a_1 \lor a_2} \nsubseteq D$. Because
$\ext{p \lor q \lor r} \nsubseteq D$ there is some $u' \in \ext{p \lor q
\lor r}$ such that $u' \notin D$. Observe first that $u \leq u'$ because
$(-\infty,u) = C' \subseteq D$. By the choice of $p_r$ there is then a
$v' \in \ext{p_r}$ such that $u' \leq v'$. Clearly $v' \in \ext{a_1 \lor
a_2}$. We also have $v' \notin D$ because $D$ is convex, $u' \notin D$,
$u - 42 \in C' \subseteq D$ and $u - 42 < u \leq u' < v'$.
\end{proof}

\section{Conclusion}

We have shown that preferential conditional logic is complete with
respect to convexity over finite sets of points on the plane. Because of
the validities discussed in Section~\ref{line not enough} this result
can not be strengthened to convexity on the real line. There seem two be
two natural directions to continue this line of research. First, one
might ask what is the logic of finite sets of points on the line and
what is the logic of the real line. As our examples also show these
logics are not the same. Second, one might try to strengthen our
completeness result. Most interesting would be to show completeness with
respect to convexity over the complete plane, analogously to the
completeness of S4 with respect to the standard topology on the full
real line:
\begin{problem}
 Is preferential conditional logic complete with respect to models of
the form $(\R^2,\C,V)$, where $\C$ is the standard convexity 
and $V$ any valuation?
\end{problem}
It might be simpler to first show completeness with respect to bounded
regions in the plane. A plausible conjecture of this kind is the
following:
\begin{problem}
 Is preferential conditional logic complete with respect to models of
the form $(U,\C,V)$, where $U \subseteq \R^2$ is regular, compact and
convex, $\C$ is the relative convexity of $U$ in $\R^2$, $V$ is a
valuation that sends all propositional letters to regular closed sets,
and the propositional connectives are interpreted over the Boolean
algebra of regular closed sets?
\end{problem}
Note that by the Krein-Milman Theorem compact sets are in the closure of
their extreme points. Thus, one might hope that for the semantics of the
conditional they still behave similar to finite sets.

Another question for further research is how conditional logic relates
to other modal logics that have been developed to reason about convexity
or lines in space. Examples are the bimodal logics of lines and points
from \cite{Balbiani98,Venema99} or the logics of the one-step convexity
and betweenness modalities in \cite{Aiello02}. It seems that the
expressivity of the conditional is weak compared to the modalities in
these logics. Thus, one might hope to find interpretations of
preferential conditional logic into some of these more expressive
logics.

In this paper we have investigated the connections between conditional
logic and convexity from a purely formal perspective. It would be
interesting to see whether this new geometric semantics can lead to new
insights about applications such as the meaning counterfactual
conditionals in natural language or the structure of defeasible
reasoning.

\bibliographystyle{plain}
\bibliography{references}

\end{document}